\newtheorem{cor}{\textit{Corollary}}
\newtheorem{theo}{\textit{Theorem}}
\newcommand\blfootnote[1]{%
  \begingroup
  \renewcommand\thefootnote{}\footnote{#1}%
  \addtocounter{footnote}{-1}%
  \endgroup
}
\begin{document}
\title{Spectral Efficiency of Mixed-ADC Massive MIMO
}
\author{Hessam Pirzadeh, \emph{Student Member}, \emph{IEEE}, and A. Lee Swindlehurst, \emph{Fellow}, \emph{IEEE}}

\maketitle
\begin{abstract}
We study the spectral efficiency (SE) of a mixed-ADC massive MIMO system in which $K$ single-antenna users communicate with a base station (BS) equipped with $M$ antennas connected to $N$ high-resolution ADCs and $M-N$ one-bit ADCs. This architecture has been proposed as an approach for realizing massive MIMO systems with reasonable power consumption. First, we investigate the effectiveness of mixed-ADC architectures in overcoming the channel estimation error caused by coarse quantization. For the channel estimation phase, we study to what extent one can combat the SE loss by exploiting just $N\ll M$ pairs of high-resolution ADCs. We extend the round-robin training scheme for mixed-ADC systems to include both high-resolution and one-bit quantized observations. Then, we analyze the impact of the resulting channel estimation error in the data detection phase. We consider random high-resolution ADC assignment and also analyze a simple antenna selection scheme to increase the SE. Analytical expressions are derived for the SE for maximum ratio combining (MRC) and numerical results are presented for zero-forcing (ZF) detection. Performance comparisons are made against systems with uniform ADC resolution and against mixed-ADC systems without round-robin training to illustrate under what conditions each approach provides the greatest benefit.
\end{abstract}

\begin{IEEEkeywords}
Massive MIMO, analog-to-digital converter, mixed-ADC, spectral efficiency.
\end{IEEEkeywords}
\blfootnote{This work was supported by the National Science Foundation under Grants ECCS-1547155 and CCF-1703635, and by a Hans Fischer Senior Fellowship from the Technische Universit\"at
M\"unchen Institute for Advanced Study.

H. Pirzadeh and A. L. Swindlehurst are with the Center for Pervasive
Communications and Computing, University of California, Irvine, CA 92697
USA (e-mail: hpirzade@uci.edu; swindle@uci.edu).

Portions of this paper have appeared in \cite{Hessam1}.}


\section{Introduction}\label{sec:Introduction}


\IEEEPARstart{T}{he} seminal work of Marzetta introduced massive MIMO as a promising architecture for future wireless systems \cite{Marzetta}. In the limit of an infinite number of base station (BS) antennas, it was shown that massive MIMO can substantially increase the network capacity. Another key potential of massive MIMO systems which has also made it interesting from a practical standpoint is its ability of achieving this goal with inexpensive, low-power components \cite{LuLu,Larsson}. However, preliminary studies on massive MIMO systems have for the most part only analyzed its performance under the assumption of perfect hardware \cite{Ngo1,HYang}. The impact of hardware imperfections and nonlinearities on massive MIMO systems has recently been investigated in \cite{Emil1}-\cite{Mollen4}. Although it is well-known that the dynamic power in massive MIMO systems can be scaled down proportional to $\sqrt{M}$, where $M$ denotes the number of BS antennas, the static power consumption at the BS will increase proportionally to ${M}$ \cite{Emil2}. Hence, considering hardware-aware design together with power consumption at the BS seems necessary in realizing practical massive MIMO systems.

Among the various components responsible for power dissipation at the BS, the contribution of analog-to-digital converters (ADCs) is known to be dominant \cite{Bai}. Consequently, the idea of replacing the high-power high-resolution ADCs with power efficient low-resolution ADCs could be a viable approach to address power consumption concerns at the massive MIMO BSs. The impact of utilizing low-resolution ADCs on the spectral efficiency (SE) and energy consumption of massive MIMO systems has been considered in \cite{Li}-\cite{JZhang}. In particular, studies on massive MIMO systems with purely one-bit ADCs show that the high spatial multiplexing gain owing to the use of a large number of antennas is still achievable even with one-bit ADCs \cite{Li,Larsson2}. However, many more antennas with one-bit ADCs (at least 2-2.5 times) are required to attain the same performance as in the high-resolution ADCs case.

One of the main causes of SE degradation in purely one-bit massive MIMO systems is the error due to the coarse quantization that occurs during the channel estimation phase. While at low SNR the loss due to one-bit quantization is only about 2 dB, at higher SNRs performance degrades considerably more and leads to an error floor \cite{Li}. The SE degradation can be reduced by improving the quality of the channel estimation prior to signal detection. One approach for doing so is to exploit so-called mixed-ADC architectures during the channel estimation phase, in which a combination of low- and high-resolution ADCs are used side-by-side. This architecture is depicted in Fig. \ref{system_model_fig}. Mixed-ADC implementations were introduced in \cite{Liang1,Liang2} and their performance was studied from an information theoretic perspective via generalized mutual information.

The basic premise behind the mixed-ADC architecture is to achieve the benefits of conventional massive MIMO systems by just exploiting $N\ll M$ pairs of high-resolution ADCs.
An SE analysis of mixed-ADC massive MIMO systems with maximum ratio combining (MRC) detection for Rayleigh and Rician fading channels was carried out in \cite{Tan} and \cite{JZhang2}, respectively.
The SE and energy efficiency of mixed-ADC systems compared with systems composed of one-bit ADCs was studied in \cite{Hessam2} for MRC detection, and conditions were derived under which each architecture provided the highest SE for a given power consumption. The advantage of using a mixed-ADC architecture in designing Bayes-optimal detectors for MIMO systems with low-resolution ADCs is reported in \cite{TCZhang}. Although the nonlinearity of the quantization process increases the complexity of the optimal detectors, it is shown that adding a small number of high-resolution ADCs to the system allows for less complex detectors with only a slight performance degradation. Moreover, the benefit of using mixed-ADC architectures in massive MIMO relay systems and cloud-RAN deployments is elaborated in \cite{Liu,Park}.

Most existing work in the mixed-ADC massive MIMO literature has assumed either perfect channel state information (CSI) or imperfect CSI with ``round-robin'' training. In the round-robin training approach \cite{Liang1,Liang2,JZhang2}, the training data is repeated several times and the high-resolution ADCs are switched among the RF chains so that every antenna can have a ``clean'' snapshot of the pilots for channel estimation. This obviously requires a larger portion of the coherence interval to be devoted to training rather than data transmission. More precisely, for $M$ antennas and $N$ pairs of high-resolution ADCs, $M/N$ pilot signals are required in the single-user scenario to estimate all $M$ channel coefficients with high-resolution ADCs. This issue is pointed out in \cite{Liang1} for the single user scenario and its impact is taken into account. This training overhead will be exacerbated in the multiuser scenario where orthogonal pilot sequences should be assigned to the users. In this case, the training period becomes $(M/N)\eta$, where $\eta$ represents the length of the pilot sequences (at least as large as the number of user terminals), which could be prohibitively large and may leave little room for data transmission. Hence, it is crucial to account for this fact in any SE analysis of mixed-ADC massive MIMO systems.

In this paper, we examine the channel estimation performance and the resulting uplink SE of mixed-ADC architectures with and without round-robin training, and compare them with implementations that employ uniform ADC quantization across all antennas. The main goals are to determine when, if at all, the benefits of using the round-robin approach with ADC/antenna switching outweigh the cost of increasing the training overhead, and furthermore to examine the question of whether or not one should employ a mixed-ADC architecture in the first place. The contributions of the paper can be summarized as follows.
\begin{itemize}
\item We first present an extension of the round-robin training approach that incorporates both high-resolution and one-bit measurements for the channel estimation. The round-robin training proposed in \cite{Liang1,Liang2,JZhang2} based the channel estimate on only high-resolution observations, assuming that no data was collected from antennas during intervals when they were not connected to the high-resolution ADCs. In contrast, our extension assumes that these antennas collect one-bit observations and combine this data with the high-resolution samples to improve the channel estimation performance.
\item We use the Bussgang decompositon \cite{JJBussgang} to develop a linear minimum mean-squared error (LMMSE) channel estimator based on the combined round-robin measurements and we derive a closed-form expression for the resulting mean-squared error (MSE). We further illustrate the importance of using the Bussgang approach rather than the simpler additive quantization noise model in obtaining the most accurate characterization of the channel estimation performance for round-robin training. The analysis illustrates that the addition of the one-bit observations considerably improves performance at low SNR.
\item We perform a spectral efficiency analysis of the mixed-ADC implementation for the MRC and ZF receivers, and obtain expressions for a lower bound on the SE that takes into account the channel estimation error and the loss of efficiency due to the round-robin training. We compare the resulting SE with that achieved by mixed-ADC implementations that do not switch ADCs among the RF chains, and hence do not use round-robin training. We also compare against the SE for architectures that do not mix the ADC resolution across the array, but instead use uniform resolution with a fixed number of comparators for different array sizes. We show that, depending on the SNR, coherence interval, number of high-resolution ADCs, and the choice of the linear receiver, there are situations where each of the considered approaches shows superior performance. In particular, using uniform low-resolution ADCs is better than a mixed-ADC approach for an interference limited system. On the other hand, a mixed-ADC system, even one with round-robin training, is superior at higher SNRs when zero-forcing is used to reduce the interference.\color{black}
\item We analyze the possible SE improvement that can be achieved by using an antenna selection algorithm that connects the high-resolution ADCs to the subset of antennas with the highest channel gain. We analytically derive the SE performance of the antenna selection algorithm for MRC and numerically study its performance for ZF detection, comparing against the simpler approach of assigning the high-resolution ADCs to an arbitrary fixed subset of the RF chains.
\end{itemize}

In addition to the above contributions, we also discuss some of the issues related to implementing an ADC switch or multiplexer in hardware that allows different ADCs to be assigned to different antennas. We restrict our analysis and numerical examples to a single-carrier flat-fading scenario, although our methodology can be used in a straightforward way to extend the results to frequency-selective fading or multiple-carrier signals ({\em e.g.}, see our prior work in Section~III.B of \cite{Li} for the SE analysis of an all-one-bit ADC system for OFDM and frequency selectivity). The reasons for focusing on the single-carrier flat-fading case are as follows: (1) the mixed-ADC assumption already makes the resulting analytical expressions quite complicated even for the simple flat-fading case, and it would be more difficult to gain insight into the problem if the expressions were further complicated; (2) the original round-robin training idea was proposed in \cite{Liang1} for the single-carrier flat-fading case, and thus we analyze it under the same assumptions; (3) the main conclusions of the paper are based on relative algorithm comparisons for the same set of assumptions, and we expect our general conclusions to remain unchanged if frequency rather than flat fading were considered; and (4) the flat fading case is still of interest in some applications, for example in a micro-cell setting with typical path-length differences of 50-100 m, the coherence bandwidth is between 3-6 MHz, which is not insignificant.

Further assumptions regarding the system model are outlined in the next section. Section \ref{sec:Training_phase} discusses channel estimation using round-robin training, and derives the LMMSE channel estimator that incorporates both the high-resolution and one-bit observations. A discussion of hardware and other practical considerations associated with using a mixed-ADC system with ADC/antenna switching is presented in Section \ref{sec:Practical}. Section \ref{sec:spectral efficiency} then presents the analysis of the spectral efficiency for MRC and ZF receivers based on the imperfect channel state estimates, including an analytical performance characterization of antenna selection and architectures with uniform ADC resolution across the array. A number of numerical studies are then presented in Section \ref{sec:Simulation} to illustrate the relative performance of the algorithms considered.
\color{black}

\emph{Notation}: We use boldface letters to denote vectors, and capitals to denote matrices. The symbols $(.)^*$, $(.)^T$, and $(.)^H$ represent conjugate, transpose, and conjugate transpose, respectively. A circularly-symmetric complex Gaussian (CSCG) random vector with zero mean and covariance matrix ${\mathbf{R}}$ is denoted $\boldsymbol{v}\sim\mathcal{CN}(\mathbf{0},\boldsymbol{\mathbf{R}})$. The symbol $\|.\|$ represents the Euclidean norm. The $K \times K$ identity matrix is denoted by $\mathit{\boldsymbol{I}_K}$ and the expectation operator by $\mathbb{E}\{.\}$. We use $\mathbf{1}_{N}$ to denote the $N\times 1$ vector of all ones, and $\mathrm{diag}\{\mathbf{C}\}$ the diagonal matrix formed from the diagonal elements of the square matrix $\mathbf{C}$. For a complex value, $c=c_R+jc_I$, we define $\mathrm{arcsin}(c)\triangleq\mathrm{arcsin}(c_R)+j\mathrm{arcsin}(c_I)$.



\section{System Model}\label{sec:SYSTEM MODEL}

\begin{figure}
\centering
\includegraphics[width=0.5\textwidth]
{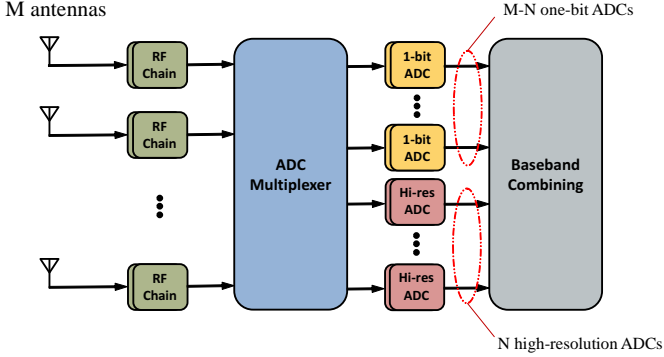}
\caption{Mixed-ADC architecture.}
\label{system_model_fig}
\end{figure}
Consider the uplink of a single-cell multi-user MIMO system consisting of $K$ single-antenna users that send their signals simultaneously to a BS equipped with ${M}$ antennas. Assuming a single-carrier frequency flat channel and symbol-rate sampling
\color{black}, the $M\times 1$ signal received at the BS from the $K$ users is given by
\begin{equation}\label{channel model}
  \mathit{\boldsymbol{r}}=\sum_{k=1}^{K}{\sqrt{p_k}\boldsymbol{g}_k
  \mathit{{s}_k}} + \mathit{\boldsymbol{n}},
\end{equation}
where $p_k$ represents the average transmission power from the $k$th user, $\boldsymbol{g}_k=\sqrt{\beta_k}\boldsymbol{h}_k$ is the channel vector between the $k$th user and the BS where $\beta_k$ models geometric attenuation and shadow fading, and $\boldsymbol{h}_k\sim\mathcal{CN}\left(\mathbf{0},\boldsymbol{I}_M\right)$ represents the fast fading  and is assumed to be independent of other users' channel vectors. 
The symbol transmitted by the $k$th user is denoted by $s_k$ where $\mathbb{E}\left\{|s_k|^2\right\}=1$ and is drawn from a CSCG codebook independent of the other users.
Finally, $\mathit{\boldsymbol{n}}\sim\mathcal{CN}\left(\mathbf{0},\sigma_n^2\boldsymbol{I}_M\right)$ denotes additive CSCG receiver noise at the BS.
The assumption of symbol-rate sampling means that the matched filter at the receiver must be implemented in the analog domain. Better performance (e.g., higher rates) could be achieved by oversampling the ADCs, particularly those with one-bit resolution.
\color{black}

We consider a block-fading model with coherence bandwidth $W_c$ and coherence time $T_c$. In this model, each channel remains constant in a coherence interval of length $\mathit{T}=T_cW_c$ symbols and changes independently between different intervals. \color{black} Note that $T$ is a fixed system parameter chosen as the minimum coherence duration of all users.
At the beginning of each coherence interval, the users send their $\eta$-tuple mutually orthogonal pilot sequences ($K\leq\eta\leq T$) to the BS for channel estimation. Denoting the length of the training phase as $\eta_{\mathrm{eff}}$, the remaining $T-\eta_{\mathrm{eff}}$ symbols are dedicated to uplink data transmission. 


\section{Training Phase}\label{sec:Training_phase}
In this section, we investigate the linear minimum mean squared error (LMMSE) channel estimator for different ADC architectures at the BS. In all scenarios, the pilot sequences are drawn from 
 an $\eta\times K$ matrix $\mathbf{\Phi}$, where the $k$th column of $\mathbf{\Phi}$, $\boldsymbol{\phi}_k$, is the $k$th user's pilot sequence and $\mathbf{\Phi}^H\mathbf{\Phi}=\boldsymbol{I}_K$. Therefore, the $M\times\eta$ received signal at the BS before quantization becomes
\begin{equation}\label{training matrix}
  \boldsymbol{X}=\sum_{k=1}^{K}{\sqrt{\eta p_k}\boldsymbol{g}_k\boldsymbol{\phi}_k^T}+\boldsymbol{N},
\end{equation}
where $\boldsymbol{N}$ is an $M\times\eta$ matrix with i.i.d. $\mathcal{CN}(0,\sigma_n^2)$ elements. Since the rows of $\boldsymbol{X}$ are mutually independent due to the assumption of spatially uncorrelated Gaussian channels and noise, we can analyze them separately. \color{black}As a result, we will focus on the $m$th row of $\boldsymbol{X}$ which is
\begin{equation}\label{training matrix row}
	\boldsymbol{x}_m^T=\sum_{k=1}^{K}{\sqrt{\eta p_k}g_{mk}\boldsymbol{\phi}_k^T}+\boldsymbol{n}_m^T, 
\end{equation}
where $g_{mk}$ is the $m$th element of the $k$th user channel vector, $\boldsymbol{g}_k$, and $\boldsymbol{n}_m^T$ is the $m$th row of $\boldsymbol{N}$. Since the analysis is not dependent on $m$, hereafter we drop this subscript and denote the received signal at the $m$th antenna by $\boldsymbol{x}$.
\subsection{ Estimation\ Using\ One-Bit\ Quantized Observations}\
In this subsection, to have a benchmark for comparison purposes, we consider the case in which all antennas at the BS are connected to one-bit ADCs. The received signal $\boldsymbol{x}^T$ after quantization by one-bit ADCs can be written as
\begin{equation}\label{quantized training1}
	\boldsymbol{y}_t^T=\mathcal{Q}\left(\boldsymbol{x}^T\right),
\end{equation}
where the element-wise one-bit quantization operation $\mathcal{Q}(\cdot)$ replaces each input entry with the quantized value $\frac{1}{\sqrt{2}}\left( \pm 1 \pm j\right)$, depending on the sign of the real and imaginary parts. According to the Bussgang decomposition \cite{JJBussgang}, the following linear representation of the quantization can be employed \cite{Li}:  
\begin{equation}\label{quantized training2}
	\mathcal{Q}\left(\boldsymbol{x}^T\right)=\sqrt{\frac{2}{\pi}}\boldsymbol{x}^T\mathbf{D}_{\boldsymbol{x}}^{-\frac{1}{2}}+\boldsymbol{q}_t^T,
\end{equation}
where $\mathbf{D}_{\boldsymbol{x}}=\mathrm{diag}\{\mathbf{C}_{\boldsymbol{x}}\}$ and $\mathbf{C}_{\boldsymbol{x}}$ denotes autocorrelation matrix of $\boldsymbol{x}$, which can be calculated as
\begin{equation}\label{x_m auto}
	\mathbf{C}_{\boldsymbol{x}}=\sum_{k=1}^{K}{\eta p_k\beta_k\boldsymbol{\phi}_k^{*}\boldsymbol{\phi}_k^T}+\sigma_n^2\mathbf{I}_{\eta}.
\end{equation}
In addition, $\boldsymbol{q}_t$ represents quantization noise which is uncorrelated with $\boldsymbol{x}$ and its autocorrelation matrix can be derived based on the \emph{arcsine law} as \cite{Jacovitti}
\begin{equation}\label{quantization noise autocorrelation}
	\mathbf{C}_{\boldsymbol{q}_t}=\frac{2}{\pi}\mathrm{arcsin}\{\mathbf{D}_{\boldsymbol{x}}^{-\frac{1}{2}} \mathbf{C}_{\boldsymbol{x}}  \mathbf{D}_{\boldsymbol{x}}^{-\frac{1}{2}}\}-\frac{2}{\pi}\mathbf{D}_{\boldsymbol{x}}^{-\frac{1}{2}} \mathbf{C}_{\boldsymbol{x}}  \mathbf{D}_{\boldsymbol{x}}^{-\frac{1}{2}}.
\end{equation}

Much of the existing work on massive MIMO systems with low-resolution ADCs employs the simple additive quantization noise model (AQNM) for their analysis \cite{Verenzuela}-\cite{JZhang}, \cite{Tan}-\cite{Park}, \cite{Hessam} which is valid only for low SNRs and does not capture the correlation among the elements of $\boldsymbol{q}_t$,  which turns out to be of crucial importance in our analysis. Hence, we consider the Bussgang decomposition instead and will show its effect on the system performance analysis. Stacking the rows of (\ref{quantized training2}) into a matrix, the one-bit quantized observation at the BS becomes
\begin{equation}\label{matrix quantized training}
	\mathbf{Y}=\sqrt{\frac{2}{\pi}}\mathbf{X}\mathbf{D}_{\boldsymbol{x}}^{-\frac{1}{2}}+\mathbf{Q},
\end{equation}
where $\mathbf{Q}$ is an $M\times\eta$ matrix whose $m$th row is $\boldsymbol{q}_t^T$.
The LMMSE estimate of the channel $\boldsymbol{G}=[\boldsymbol{g}_1,...,\boldsymbol{g}_K]$ based on just one-bit quantized observations (\ref{matrix quantized training}) is given in the following theorem.
\begin{theo}\label{theo1}
The LMMSE estimate of the $k$-th user channel, $\boldsymbol{g}_k$, given the one-bit quantized observations $\mathbf{Y}$ is \cite{Li}
\begin{equation}\label{channel estimate}
  \hat{\boldsymbol{g}}_k=\frac{\beta_k}{\beta_k+\sigma^2_{w_k}}\sqrt{\frac{1}{\eta p_k}}\mathbf{Y}\bar{\boldsymbol{\phi}}_k^{*}
  ,
\end{equation}
where 
\begin{equation}\label{effective pilot}
	\bar{\boldsymbol{\phi}}_k\triangleq\sqrt{\frac{\pi}{2}}\mathbf{D}_{\boldsymbol{x}}^{\frac{1}{2}}\boldsymbol{\phi}_k
\end{equation}
\begin{equation}\label{error variance}
	\sigma^2_{w_k}=
	\frac{1}{\eta p_k}\left(\sigma_n^2+\bar{\boldsymbol{\phi}}_k^T\mathbf{C}_{\boldsymbol{q}_t}\bar{\boldsymbol{\phi}}_k^{*}\right).
	\end{equation}
	Define the channel estimation error $\boldsymbol{\varepsilon}\triangleq \hat{\boldsymbol{g}}_k-\boldsymbol{g}_k$. Then we have
\begin{equation}\label{ghat variance}
  \sigma_{\hat{g}_k}^{2}=\frac{\beta_k^2}{\beta_k+\sigma^2_{w_k}}~~~\text{and}~~~\sigma_{{\varepsilon}_k}^{2}=\frac{\sigma^2_{w_k}\beta_k}{\beta_k+\sigma^2_{w_k}},
\end{equation}
where $\sigma_{\hat{g}_k}^{2}$ and $\sigma_{{\varepsilon}_k}^{2}$ are the variances of the independent zero-mean elements of $\hat{\boldsymbol{g}}_k$ and $\boldsymbol{\varepsilon}$, respectively.
\end{theo}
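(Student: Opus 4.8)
The plan is to treat the Bussgang relation (\ref{matrix quantized training}) as an \emph{exact} linear observation model — exact in the sense that the Bussgang decomposition reproduces the true first and second moments of $\mathbf{Y}$ — so that the optimal linear estimator is just the standard Wiener/LMMSE filter $\hat{\boldsymbol{g}}_k=\mathbf{C}_{g_k\boldsymbol{y}}\mathbf{C}_{\boldsymbol{y}}^{-1}\boldsymbol{y}$ applied antenna-by-antenna. Since the rows of $\mathbf{X}$ (hence of $\mathbf{Y}$) are independent, I would fix one antenna, drop the index $m$, estimate the scalar $g_k$ from the length-$\eta$ row $\boldsymbol{y}^T$, and then stack the per-antenna results to recover the vector form claimed in (\ref{channel estimate}).

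The core computation is the matched-filter statistic $\boldsymbol{y}^T\bar{\boldsymbol{\phi}}_k^{*}$. The algebraic collapse that drives everything is that $\mathbf{D}_{\boldsymbol{x}}^{-1/2}\bar{\boldsymbol{\phi}}_k^{*}=\sqrt{\pi/2}\,\boldsymbol{\phi}_k^{*}$ by the definition (\ref{effective pilot}), so the Bussgang factor $\sqrt{2/\pi}$ in (\ref{quantized training2}) cancels and the orthonormality $\boldsymbol{\phi}_j^T\boldsymbol{\phi}_k^{*}=\delta_{jk}$ (inherited from $\mathbf{\Phi}^H\mathbf{\Phi}=\boldsymbol{I}_K$) annihilates every interfering user. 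This yields $\sqrt{1/(\eta p_k)}\,\boldsymbol{y}^T\bar{\boldsymbol{\phi}}_k^{*}=g_k+w_k$ with effective noise $w_k=\sqrt{1/(\eta p_k)}\,(\boldsymbol{n}^T\boldsymbol{\phi}_k^{*}+\boldsymbol{q}_t^T\bar{\boldsymbol{\phi}}_k^{*})$. I would then compute $\sigma_{w_k}^2$: the thermal and quantization terms are uncorrelated because the Bussgang noise obeys $\boldsymbol{q}_t\perp\boldsymbol{x}$ (hence $\boldsymbol{q}_t\perp\boldsymbol{n}$), the thermal term contributes $\sigma_n^2$ since $\|\boldsymbol{\phi}_k\|=1$, and the quantization term contributes the quadratic form $\bar{\boldsymbol{\phi}}_k^T\mathbf{C}_{\boldsymbol{q}_t}\bar{\boldsymbol{\phi}}_k^{*}$ with $\mathbf{C}_{\boldsymbol{q}_t}$ from (\ref{quantization noise autocorrelation}), reproducing (\ref{error variance}). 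Applying the scalar Wiener coefficient $\beta_k/(\beta_k+\sigma_{w_k}^2)$ to $g_k+w_k$ — legitimate because $g_k$ is zero-mean with variance $\beta_k$ and uncorrelated with $w_k$ — gives (\ref{channel estimate}). The two variances in (\ref{ghat variance}) then fall out mechanically: $\sigma_{\hat g_k}^2$ by squaring the estimator, and $\sigma_{\varepsilon_k}^2=\beta_k-\sigma_{\hat g_k}^2$ from the orthogonality principle, with zero-mean and cross-antenna independence inherited from the Gaussianity of $\boldsymbol{h}_k$ and the row independence of $\mathbf{Y}$.

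The step I expect to be the real obstacle is justifying that this matched-filter output is a \emph{sufficient} statistic, i.e. that the scalar estimator above actually coincides with the full LMMSE filter that would invert the complete covariance $\mathbf{C}_{\boldsymbol{y}}=\tfrac{2}{\pi}\mathrm{arcsin}(\mathbf{D}_{\boldsymbol{x}}^{-1/2}\mathbf{C}_{\boldsymbol{x}}\mathbf{D}_{\boldsymbol{x}}^{-1/2})$. This is not obvious: the arcsine law makes $\mathbf{C}_{\boldsymbol{q}_t}$, and hence $\mathbf{C}_{\boldsymbol{y}}$, non-white and not diagonalized by the pilot directions, so a priori projecting onto $\bar{\boldsymbol{\phi}}_k^{*}$ could discard useful information. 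I would settle this by checking the orthogonality condition $\mathbb{E}\{(\hat{g}_k-g_k)\boldsymbol{y}^H\}=\mathbf{0}$ directly; the mechanism that makes it hold is that the off-diagonal quantization correlations are absorbed \emph{exactly} into the scalar term $\bar{\boldsymbol{\phi}}_k^T\mathbf{C}_{\boldsymbol{q}_t}\bar{\boldsymbol{\phi}}_k^{*}$, so the extra contribution this makes to $\beta_k+\sigma_{w_k}^2$ is precisely the correction the full covariance inverse would otherwise supply. The orthonormal pilot structure together with the Bussgang property $\boldsymbol{q}_t\perp\boldsymbol{x}$ is exactly what forces this cancellation, and I would rely on those two facts throughout.
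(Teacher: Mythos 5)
Your computational core is correct, and it is essentially the derivation behind this theorem — note that the paper itself does not prove Theorem~\ref{theo1} at all; it imports the result from \cite{Li}, and the route taken there is exactly yours: Bussgang linearization, despreading with the effective pilot so that $\mathbf{D}_{\boldsymbol{x}}^{-1/2}\bar{\boldsymbol{\phi}}_k^{*}=\sqrt{\pi/2}\,\boldsymbol{\phi}_k^{*}$ cancels the $\sqrt{2/\pi}$ factor, orthonormality killing the interfering users, a scalar Wiener coefficient $\beta_k/(\beta_k+\sigma_{w_k}^2)$ applied to $g_k+w_k$, and the orthogonality principle giving $\sigma_{\varepsilon_k}^2=\beta_k-\sigma_{\hat g_k}^2$. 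The uncorrelatedness steps you invoke ($\boldsymbol{q}_t\perp\boldsymbol{n}$ and $\boldsymbol{q}_t\perp g_k$, not just $\boldsymbol{q}_t\perp\boldsymbol{x}$) do need a word of justification — they follow because $(\boldsymbol{g},\boldsymbol{n},\boldsymbol{x})$ are jointly Gaussian, so $\boldsymbol{n}$ and $g_k$ decompose into a linear function of $\boldsymbol{x}$ plus a residual independent of $\boldsymbol{x}$, and $\boldsymbol{q}_t$ is a deterministic zero-mean function of $\boldsymbol{x}$ — but this is standard and easily patched.

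The genuine gap is in your final paragraph: the cancellation mechanism you assert does not exist in general. Absorbing $\bar{\boldsymbol{\phi}}_k^T\mathbf{C}_{\boldsymbol{q}_t}\bar{\boldsymbol{\phi}}_k^{*}$ into $\sigma_{w_k}^2$ enforces orthogonality of the error only to the one-dimensional statistic $\boldsymbol{y}^T\bar{\boldsymbol{\phi}}_k^{*}$ itself, not to every entry of $\boldsymbol{y}$. Writing out your proposed check $\mathbb{E}\{(\hat g_k-g_k)\boldsymbol{y}^H\}=\mathbf{0}$, the signal-plus-thermal part of $\mathbf{C}_{\boldsymbol{y}}\bar{\boldsymbol{\phi}}_k^{*}$ does align with $\mathbb{E}\{g_k\boldsymbol{y}^{*}\}\propto\mathbf{D}_{\boldsymbol{x}}^{-1/2}\boldsymbol{\phi}_k^{*}$, but the quantization part contributes $\mathbf{C}_{\boldsymbol{q}_t}\mathbf{D}_{\boldsymbol{x}}^{1/2}\boldsymbol{\phi}_k^{*}$, and global orthogonality requires $\mathbf{C}_{\boldsymbol{q}_t}\mathbf{D}_{\boldsymbol{x}}^{1/2}\boldsymbol{\phi}_k^{*}\propto\mathbf{D}_{\boldsymbol{x}}^{-1/2}\boldsymbol{\phi}_k^{*}$. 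The arcsine-law matrix (\ref{quantization noise autocorrelation}) does not satisfy this for generic configurations — e.g., unequal received pilot powers $\eta p_j\beta_j$ across users, or $\eta>K$ — because $\mathrm{arcsin}$ applied elementwise to a non-white correlation produces off-diagonal structure that is not diagonalized by the pilot directions; only its projection onto $\bar{\boldsymbol{\phi}}_k$ is captured by your scalar $\sigma_{w_k}^2$. The proportionality does hold under the conditions of Corollary~\ref{corol1} (power control, $\eta=K$, $\mathbf{\Phi}\mathbf{\Phi}^H=\mathbf{I}_K$), where $\mathbf{C}_{\boldsymbol{x}}$ is a scaled identity and $\mathbf{C}_{\boldsymbol{q}_t}=(1-2/\pi)\mathbf{I}_K$, and there one can verify the scalar balance exactly. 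So to make your argument sound you should either (i) prove exact LMMSE optimality only under those whitening conditions — which is the regime the rest of the paper actually operates in — or (ii) state the theorem's estimator as the MMSE estimator within the matched-filter class $c\,\mathbf{Y}\bar{\boldsymbol{\phi}}_k^{*}$, which is how it is derived and used in \cite{Li}; as a claim of unrestricted optimality over all linear functions of $\mathbf{Y}$, your sufficiency step would fail.
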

From Theorem \ref{theo1}, it is apparent that in the channel estimation analysis of massive MIMO systems with one-bit ADCs, the estimation error is directly affected not only by the inner product of the pilot sequences, but also by their outer product as well \cite{Li}. 
To get insight into the impact of the one-bit quantization on the channel estimation, in the next corollary we adopt the statistics-aware power control policy proposed in \cite{Emil Bj}. Apart from its practical advantages, this policy is especially suitable specially for one-bit ADCs since it avoids near-far blockage and hence strong interference. Moreover, this power control approach also leads to simple expressions and provides analytical convenience for our derivation in Section VI.  Although not the focus of this paper, we note that in general a massive MIMO system employing a mixed-ADC architecture will be more resilient than an all one-bit implementation to the near-far effect and jamming. This is an interesting topic for further study.
\color{black}

\begin{cor}\label{corol1}
For the case in which power control is performed, i.e., $p_k=\frac{p}{\beta_k}$ for some fixed value $p$ and for $k\in\mathcal{K}=\{1,\cdots,K\}$, the number of users is equal to the length of pilot sequences, i.e., $\eta=K$, and the pilot matrix satisfies $\mathbf{\Phi}\mathbf{\Phi}^H=\mathbf{I}_K$, we have
\begin{equation}\label{cor1_1}
	\mathbf{C}_{\boldsymbol{x}}=\left(Kp+\sigma_n^2\right)\mathbf{I}_K=\mathbf{D}_{\boldsymbol{x}}
\end{equation}	 
\begin{equation}\label{cor1_2}
	\mathbf{C}_{\boldsymbol{q}_t}=\left(1-\frac{2}{\pi}\right)\mathbf{I}_K,
\end{equation}
which yields
\begin{equation}\label{cor1_3_1}
	\sigma_{\hat{g}_k}^{2}=\frac{2}{\pi}\frac{\beta_k}{1+\frac{\sigma_n^2}{Kp}}
\end{equation}
\begin{equation}\label{cor1_3_2}
\sigma_{{\varepsilon}_k}^{2}=\frac{\left(\frac{Kp}{\sigma^2_{n}}\left(1-\frac{2}{\pi}\right)+1\right)\beta_k}{1+\frac{Kp}{\sigma^2_{n}}}.	
\end{equation}

\end{cor}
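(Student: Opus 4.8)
The plan is to specialize each quantity appearing in Theorem~\ref{theo1} to the power-control regime $p_k=p/\beta_k$, $\eta=K$, and then carry out the resulting simplifications in their natural order: first $\mathbf{C}_{\boldsymbol{x}}$, then $\mathbf{C}_{\boldsymbol{q}_t}$, then $\sigma^2_{w_k}$, and finally the two variances in \eqref{ghat variance}.

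First I would substitute $p_k\beta_k=p$ into \eqref{x_m auto}, which turns the weighted sum of rank-one pilot outer products into $Kp\sum_{k}\boldsymbol{\phi}_k^{*}\boldsymbol{\phi}_k^{T}=Kp\,(\mathbf{\Phi}\mathbf{\Phi}^H)^{*}$. Invoking the assumption $\mathbf{\Phi}\mathbf{\Phi}^H=\mathbf{I}_K$ collapses this to $(Kp+\sigma_n^2)\mathbf{I}_K$, establishing \eqref{cor1_1}; since the result is already a scalar multiple of the identity it coincides with its own diagonal, so $\mathbf{D}_{\boldsymbol{x}}=\mathbf{C}_{\boldsymbol{x}}$. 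This identity is the pivotal simplification, because it forces the normalized correlation $\mathbf{D}_{\boldsymbol{x}}^{-1/2}\mathbf{C}_{\boldsymbol{x}}\mathbf{D}_{\boldsymbol{x}}^{-1/2}$ in \eqref{quantization noise autocorrelation} to equal $\mathbf{I}_K$. Applying the element-wise $\mathrm{arcsin}$ convention to $\mathbf{I}_K$ then gives $\tfrac{\pi}{2}\mathbf{I}_K$ on the diagonal and zero off-diagonal, and the two terms of \eqref{quantization noise autocorrelation} combine to $(1-\tfrac{2}{\pi})\mathbf{I}_K$, which is \eqref{cor1_2}.

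Next I would evaluate the quadratic form $\bar{\boldsymbol{\phi}}_k^{T}\mathbf{C}_{\boldsymbol{q}_t}\bar{\boldsymbol{\phi}}_k^{*}$ in \eqref{error variance}. Using \eqref{effective pilot} with the scalar $\mathbf{D}_{\boldsymbol{x}}^{1/2}=\sqrt{Kp+\sigma_n^2}\,\mathbf{I}_K$ pulls out a factor $\tfrac{\pi}{2}(Kp+\sigma_n^2)$, and the diagonal $\mathbf{C}_{\boldsymbol{q}_t}$ from \eqref{cor1_2} reduces the remaining form to $(1-\tfrac{2}{\pi})\|\boldsymbol{\phi}_k\|^2$. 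Here the second pilot assumption enters: $\mathbf{\Phi}^H\mathbf{\Phi}=\mathbf{I}_K$ forces $\|\boldsymbol{\phi}_k\|^2=1$, so the quadratic form equals $(\tfrac{\pi}{2}-1)(Kp+\sigma_n^2)$. Substituting into \eqref{error variance} with $\eta p_k=Kp/\beta_k$ gives $\sigma^2_{w_k}=\tfrac{\beta_k}{Kp}\bigl(\sigma_n^2+(\tfrac{\pi}{2}-1)(Kp+\sigma_n^2)\bigr)$.

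Finally I would insert this $\sigma^2_{w_k}$ into the variance formulas \eqref{ghat variance}. The useful algebraic observation is that $\beta_k+\sigma^2_{w_k}=\tfrac{\beta_k}{Kp}(Kp+A)$, where $A$ is the bracketed term above, and that $Kp+A$ telescopes exactly to $\tfrac{\pi}{2}(Kp+\sigma_n^2)$. This collapse makes $\sigma^2_{\hat g_k}=\beta_k Kp/\bigl(\tfrac{\pi}{2}(Kp+\sigma_n^2)\bigr)$, which is \eqref{cor1_3_1} after dividing numerator and denominator by $Kp$; the same cancellation applied to $\sigma^2_{\varepsilon_k}=\sigma^2_{w_k}\beta_k/(\beta_k+\sigma^2_{w_k})$ leaves $\beta_k A/(Kp+A)$, and normalizing both by $\sigma_n^2$ produces \eqref{cor1_3_2}. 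No step is genuinely difficult; the only points requiring care are the element-wise (rather than matrix-functional) interpretation of $\mathrm{arcsin}$ in \eqref{quantization noise autocorrelation}, and keeping track of which pilot orthogonality condition—the outer product $\mathbf{\Phi}\mathbf{\Phi}^H$ versus the column norm from $\mathbf{\Phi}^H\mathbf{\Phi}$—is invoked at each stage.
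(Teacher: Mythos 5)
Your proof is correct: every substitution checks out (in particular the identity $Kp+\sigma_n^2+(\tfrac{\pi}{2}-1)(Kp+\sigma_n^2)=\tfrac{\pi}{2}(Kp+\sigma_n^2)$ that collapses the variance formulas, and the element-wise $\mathrm{arcsin}$ of $\mathbf{I}_K$ giving $(1-\tfrac{2}{\pi})\mathbf{I}_K$), and it matches the paper's implicit argument, which treats the corollary as a direct specialization of Theorem~\ref{theo1} under $p_k=p/\beta_k$, $\eta=K$, and $\mathbf{\Phi}\mathbf{\Phi}^H=\mathbf{I}_K$. Your care in tracking which orthogonality condition ($\mathbf{\Phi}\mathbf{\Phi}^H$ versus $\mathbf{\Phi}^H\mathbf{\Phi}$, the latter forcing $\|\boldsymbol{\phi}_k\|^2=1$) is used at each step is exactly the right bookkeeping, noting that for square $\mathbf{\Phi}$ the two conditions coincide.
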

Corrollary \ref{corol1} states conditions under which $\mathbf{C}_{\boldsymbol{q}_t}$ is diagonal. In addition, it is evident that the channel estimation suffers from an error floor at high SNRs.

\subsection{ Channel\ Estimation\ with\ Few\ Full\ Resolution\ ADCs}\
 Channel estimation with coarse observations suffers from large errors especially in the high SNR regime. On the other hand, while estimating all channels using high-resolution ADCs is desirable, the resulting power consumption burden makes this approach practically infeasible. This motivates the use of a mixed-ADC architecture for channel estimation to eliminate the large estimation error caused by one-bit quantization while keeping the power consumption penalty at an acceptable level. In the approach described in \cite{Liang1,Liang2,JZhang2} , $N\ll M$ pairs of high-resolution ADCs are deployed and switched between different antennas during different transmission intervals in an approach referred to as ``round-robin'' training. In this approach, 
\begin{figure}
\centering
\includegraphics[width=0.48\textwidth]
{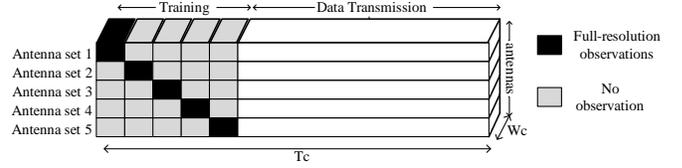}
\caption{Transmission protocol for estimation using full-resolution observations.}
\label{fig11_model}
\end{figure}
the $M$ BS antennas are grouped into $M/N$ sets\footnote{We assume $M/N$ is an integer throughout the paper.}. In the first training sub-interval, users send their mutually orthogonal pilots to the BS while the $N$ high-resolution ADC pairs are connected to the first set of $N$ antennas. After receiving the pilot symbols from all users in the $\eta$-symbol-length training sub-interval, the high-resolution ADCs are switched to the next set of antennas and so on. In this manner, after $(M/N)\eta$ pilot transmissions ($M/N$ sub-intervals), we can estimate each channel based on observations with only high-resolution ADCs.
This round-robin channel estimation protocol is illustrated in Fig. \ref{fig11_model} for a mixed-ADC system with $M/N=5$. 

Stacking all $N\times\eta$ full-resolution observations into an $M\times\eta$ matrix, $\mathbf{X}$, the LMMSE estimate of the $k$-th user channel, $\boldsymbol{g}_k$, is \cite{Ngo1}
\begin{equation}\label{channel estimate2}
  \hat{\boldsymbol{g}}_k=\frac{1}{1+\frac{\sigma_n^2}{\eta p_k\beta_k}}\frac{1}{\sqrt{\eta p_k}}\mathbf{X}\boldsymbol{\phi}_k^{*},
\end{equation}
and the resulting variances of the channel estimate and the error are given respectively by
\begin{equation}\label{ghat variance2}
  \sigma_{\hat{g}_k}^{2}=\frac{\beta_k}{1+\frac{\sigma_n^2}{\eta p_k\beta_k}}~~~\text{and}~~~\sigma_{{\varepsilon}_k}^{2}=\frac{\beta_k}{1+\frac{\eta p_k\beta_k}{\sigma^2_{n}}}.
\end{equation}

Eq. (\ref{ghat variance2}) states that by employing only $N$ pairs of high-resolution ADCs and by expending a larger portion of the coherence interval for channel estimation, the channel can be estimated with the same precision as that achieved by conventional high-resolution ADC massive MIMO systems. However, this comes at the high cost of repeating the training data $M/N$ times, which can significantly reduce the time available for data transmission. Indeed, we will see later that in some cases, a mixed-ADC implementation with round-robin training achieves a lower SE than a system with all one-bit ADCs because of the long training interval (even with the improvements we propose below for the round-robin method). However, we will also see that there are other situations for which the mixed-ADC round-robin method provides a large gain in SE. The primary goal of this paper is to elucidate under what conditions these and other competing approaches provide the best performance. \color{black}

Before analyzing the tradeoff between the gain (lower channel estimation error) and cost (longer training period) of the round-robin approach, in the next subsection we propose channel estimation based on the use of both full-resolution and one-bit data received by the BS in order to further improve the performance of the mixed-ADC architecture with round-robin channel estimation. To our knowledge, this approach has not been considered in prior work on mixed-ADC massive MIMO.


\subsection{Estimation\ Using\ Joint\ Full-Resolution/One-Bit Observations}\
 While channel estimation performance based on coarsely quantized observations suffers from large errors in the high SNR regime, it provides reasonable performance for low SNRs. Hence, in this subsection we consider joint channel estimation based on observations from both high-resolution and one-bit ADCs to further improve the channel estimation accuracy. Unlike the previous subsection in which the one-bit ADCs were not employed, here we incorporate their coarse observations into the channel estimation procedure. The protocol for this method is illustrated in Fig. \ref{fig22} for a mixed-ADC system with $M/N=5$. 
\begin{figure}
\centering
\includegraphics[width=0.48\textwidth]
{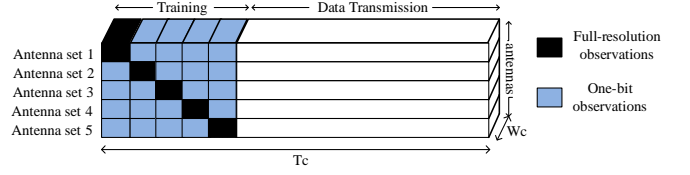}
\caption{Transmission protocol for estimation using full-resolution/one-bit observations.}
\label{fig22}
\end{figure}
It can be seen that, in addition to one set of full-resolution observations for each antenna, there are $(M/N)-1$ sets of one-bit observations which are also taken into account for channel estimation. The next theorem characterizes the performance of this approach.
\begin{theo}\label{theo3}
Stacking all $N\times\eta$ full-resolution observations into an $M\times\eta$ matrix, $\mathbf{X}$, and all $(M/N)-1$ $N\times\eta$ one-bit quantized observations into $M\times\eta$ matrices, $\mathbf{Y}_t$, $t\in\mathcal{T}=\{1,...,M/N-1\}$, the LMMSE estimate of the $k$-th user channel, $\boldsymbol{g}_k$, is
\begin{equation}\label{channel estimate3}
  \hat{\boldsymbol{g}}_k=\sqrt{\frac{1}{\eta p_k}}\left(w_{\infty_k}\mathbf{X}\boldsymbol{\phi}_k^{*} + w_{1_k}\sum_{t=1}^{\frac{M}{N}-1}{\mathbf{Y}_t\bar{\boldsymbol{\phi}}_k^{*}}\right),
\end{equation}
where 
\begin{equation}\label{weight1}
	w_{\infty_k}=\frac{\frac{\eta p_k}{\sigma_n^2}}{\frac{1}{\beta_k}+\frac{\eta p_k}{\sigma_n^2}+\varsigma_k(p_k)}
\end{equation}
\begin{equation}\label{weight2}
	w_{1_k}=\frac{\left(\frac{M}{N}-1\right)^{-1}\varsigma_k(p_k)}{\frac{1}{\beta_k}+\frac{\eta p_k}{\sigma_n^2}+\varsigma_k(p_k)}
\end{equation}
\begin{equation}\label{varsigma_def}
	\varsigma_k(p_k)=\frac{\left(\frac{M}{N}-1\right)}{\sigma^2_{w_k}+\left(\frac{M}{N}-2\right)\varrho_k}
\end{equation}
\begin{equation}\label{sigma_w}
	\sigma^2_{w_k}=
	\frac{1}{\eta p_k}\left(\sigma_n^2+\bar{\boldsymbol{\phi}}_k^T\mathbf{C}_{\boldsymbol{q}_t}\bar{\boldsymbol{\phi}}_k^{*}\right)
\end{equation}
\begin{equation}\label{covariance}
	\varrho_k=\frac{1}{\eta p_k}\bar{\boldsymbol{\phi}}_k^T\bar{\mathbf{C}}_{\boldsymbol{q}_t}\bar{\boldsymbol{\phi}}_k^{*}
\end{equation}
\begin{equation}\label{covar_new}
	\bar{\mathbf{C}}_{\boldsymbol{q}_t}=\frac{2}{\pi}\mathrm{arcsin}\{\bar{\mathbf{D}}_{\boldsymbol{x}}^{-\frac{1}{2}} \bar{\mathbf{C}}_{\boldsymbol{x}}  \bar{\mathbf{D}}_{\boldsymbol{x}}^{-\frac{1}{2}}\}-\frac{2}{\pi}\bar{\mathbf{D}}_{\boldsymbol{x}}^{-\frac{1}{2}} \bar{\mathbf{C}}_{\boldsymbol{x}}  \bar{\mathbf{D}}_{\boldsymbol{x}}^{-\frac{1}{2}}
\end{equation}
\begin{equation}\label{covar_new2}
	\bar{\mathbf{C}}_{\boldsymbol{x}}=\sum_{k=1}^{K}{\eta p_k\beta_k\boldsymbol{\phi}_k^{*}\boldsymbol{\phi}_k^T}
\end{equation}
\begin{equation}
\bar{\mathbf{D}}_{\boldsymbol{x}}=\mathrm{diag}\{\bar{\mathbf{C}}_{\boldsymbol{x}}\}.
\end{equation}
This approach yields the following variances for the channel estimate and the estimation error, respectively:
\begin{equation}\label{ghat variance3}
  \sigma_{\hat{g}_k}^{2}=\frac{\frac{\eta p_k}{\sigma_n^2}+\varsigma_k(p_k)}{\frac{1}{\beta_k}+\frac{\eta p_k}{\sigma_n^2}+\varsigma_k(p_k)}\beta_k
\end{equation}
\begin{equation}\label{error variance3}
  \sigma_{\varepsilon_k}^{2}=\frac{1}{\frac{1}{\beta_k}+\frac{\eta p_k}{\sigma_n^2}+\varsigma_k(p_k)}.
\end{equation}
\end{theo}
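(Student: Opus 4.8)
The plan is to reduce the joint estimation to a small per-antenna LMMSE problem, determine the covariance structure of the matched-filter outputs, and then solve the normal equations by exploiting the statistical exchangeability of the one-bit sub-intervals.

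First I would fix the antenna index $m$ (the rows of all observation matrices are mutually independent) and replace the raw observations by the matched-filter statistics $u=\mathbf{X}\boldsymbol{\phi}_k^{*}$ and $v_t=\mathbf{Y}_t\bar{\boldsymbol{\phi}}_k^{*}$, $t\in\mathcal{T}$. Because the pilots are orthonormal, correlating the full-resolution row against $\boldsymbol{\phi}_k^{*}$ annihilates the other users and leaves $u=\sqrt{\eta p_k}\,g_k+\boldsymbol{n}^{T}\boldsymbol{\phi}_k^{*}$; applying the same projection to the Bussgang form (\ref{quantized training2}) makes the gain $\sqrt{2/\pi}\,\mathbf{D}_{\boldsymbol{x}}^{-1/2}$ cancel against $\bar{\boldsymbol{\phi}}_k=\sqrt{\pi/2}\,\mathbf{D}_{\boldsymbol{x}}^{1/2}\boldsymbol{\phi}_k$, giving $v_t=\sqrt{\eta p_k}\,g_k+\boldsymbol{n}_t^{T}\boldsymbol{\phi}_k^{*}+\boldsymbol{q}_t^{T}\bar{\boldsymbol{\phi}}_k^{*}$. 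As in Theorem \ref{theo1} and (\ref{channel estimate2}), these projections carry all the second-order information about $g_k$, so it suffices to form the LMMSE estimate from the stacked vector $\boldsymbol{z}=[\,u,\,v_1,\dots,v_{M/N-1}\,]^{T}$, every entry of which equals $\sqrt{\eta p_k}\,g_k$ plus noise.

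Next I would assemble the second-order statistics of $\boldsymbol{z}$. Since the term $\sqrt{\eta p_k}\,g_k$ is common to all entries, $\mathbf{C}_{\boldsymbol{z}}=\eta p_k\beta_k\,\mathbf{1}\mathbf{1}^{T}+\mathbf{C}_{\mathrm{noise}}$. The full-resolution noise has variance $\sigma_n^2$ and, being collected in a sub-interval disjoint from every one-bit measurement, is uncorrelated with all $v_t$; each one-bit entry has variance $\eta p_k\sigma^2_{w_k}$ exactly as in (\ref{error variance}) and (\ref{sigma_w}). The delicate quantity is the correlation between two distinct one-bit projections $v_t$ and $v_{t'}$: their additive noises $\boldsymbol{n}_t,\boldsymbol{n}_{t'}$ are independent, but the quantization noises $\boldsymbol{q}_t,\boldsymbol{q}_{t'}$ are correlated because the two sub-intervals share the same channel realization. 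Applying the arcsine law to the cross-correlation of the underlying Gaussian inputs $\boldsymbol{x}_t,\boldsymbol{x}_{t'}$ -- whose cross-covariance retains only the signal component $\bar{\mathbf{C}}_{\boldsymbol{x}}$ of (\ref{covar_new2}) once the independent noise is removed -- yields $\bar{\mathbf{C}}_{\boldsymbol{q}_t}$ of (\ref{covar_new}) and hence $\mathbb{E}\{v_t v_{t'}^{*}\}-\eta p_k\beta_k=\eta p_k\varrho_k$ for $t\neq t'$, with $\varrho_k$ as in (\ref{covariance}). Therefore $\mathbf{C}_{\mathrm{noise}}=\mathrm{blockdiag}\big(\sigma_n^2,\,\mathbf{A}\big)$ with $\mathbf{A}=\eta p_k\big[(\sigma^2_{w_k}-\varrho_k)\mathbf{I}+\varrho_k\mathbf{1}\mathbf{1}^{T}\big]$. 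Getting this cross-sub-interval correlation right -- instead of dropping it through the additive quantization-noise model -- is the main obstacle, and it is exactly the feature that distinguishes the Bussgang treatment from earlier work.

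Finally I would solve the normal equations. Exchangeability of the one-bit sub-intervals forces a common weight on all $v_t$, so the estimator necessarily takes the two-weight form of (\ref{channel estimate3}). Inverting $\mathbf{A}$ by Sherman--Morrison gives $\mathbf{1}^{T}\mathbf{A}^{-1}\mathbf{1}=\varsigma_k(p_k)/(\eta p_k)$ with $\varsigma_k$ as in (\ref{varsigma_def}), so that $\mathbf{1}^{T}\mathbf{C}_{\mathrm{noise}}^{-1}\mathbf{1}=\sigma_n^{-2}+\varsigma_k/(\eta p_k)$. Substituting into the LMMSE estimator $\hat{g}_k=\boldsymbol{c}_{g\boldsymbol{z}}\mathbf{C}_{\boldsymbol{z}}^{-1}\boldsymbol{z}$ with $\boldsymbol{c}_{g\boldsymbol{z}}=\sqrt{\eta p_k}\,\beta_k\,\mathbf{1}^{T}$ and simplifying the rank-one signal term through the matrix-inversion lemma collapses the result to the weights $w_{\infty_k}$ and $w_{1_k}$ of (\ref{weight1}) and (\ref{weight2}). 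The error variance then follows from $\sigma^2_{\varepsilon_k}=\beta_k\big/\big(1+\eta p_k\beta_k\,\mathbf{1}^{T}\mathbf{C}_{\mathrm{noise}}^{-1}\mathbf{1}\big)$, which reduces to (\ref{error variance3}), and $\sigma^2_{\hat{g}_k}=\beta_k-\sigma^2_{\varepsilon_k}$ recovers (\ref{ghat variance3}), completing the proof.
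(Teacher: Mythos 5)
Your proposal is correct and follows essentially the same route as the paper's Appendix A: reduce to a per-antenna scalar estimation problem from the stacked matched-filter statistics, identify the block-diagonal noise covariance whose one-bit block is equicorrelated with cross-sub-interval correlation $\varrho_k$ obtained via the Bussgang decomposition and arcsine law, and invert it by Sherman--Morrison/Woodbury to get $\mathbf{1}^T\mathbf{S}^{-1}\mathbf{1}=\varsigma_k(p_k)$. The only cosmetic differences are that you work with unnormalized projections and derive the LMMSE weights by applying the matrix-inversion lemma to the rank-one signal term, whereas the paper normalizes by $\sqrt{1/(\eta p_k)}$ and quotes the Bayesian Gauss--Markov formula from Kay directly --- these are equivalent.
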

\color{black}
\begin{proof}
See Appendix A.	
\end{proof}

	Theorem \ref{theo3} demonstrates the optimal approach for combining the observations from high-resolution and one-bit ADCs. In addition, this highlights the importance of considering the correlation among the one-bit observations in the analysis of mixed-ADC channel estimation, something that could not be addressed by the widely-used AQNM approach. More precisely, it can be seen that the impact of joint high-resolution/one-bit channel estimation is manifested in the variance of the channel estimation error by the term $\varsigma_k(p_k)$. To see this, assume that the correlation among one-bit observations in different training sub-intervals is ignored (as would be the case with the AQNM approach). As shown in the appendix, this is equivalent to setting $\varrho_k = 0$ in (\ref{covariance}). Under this assumption, $\varsigma_k(p_k)$ becomes
\begin{equation}\label{varsigma_no_cor}
	\varsigma_{k_0}(p_k)=\frac{\left(\frac{M}{N}-1\right)}{\sigma^2_{w_k}}>\varsigma_{k}(p_k),
\end{equation}
and thus, $\sigma_{\varepsilon_{k}}^{2}>\sigma_{\varepsilon_{k_0}}^{2}$ where $\sigma_{\varepsilon_{k_0}}^{2}$ denotes the estimation error for $\varrho_k=0$. Consequently, the AQNM model yields an overly optimistic assessment of the channel estimation error compared with the more accurate Bussgang analysis. We will see below that the impact of the AQNM approximation is significant for mixed-ADC channel estimation.

The next corollary provides insight into the impact of the system parameters on the joint high-resolution/one-bit LMMSE estimation.\color{black}
\begin{cor}\label{corol2}
For the case in which power control is performed, i.e., $p_k=\frac{p}{\beta_k}$ for $k\in\mathcal{K}$, the number of users is equal to the length of pilot sequences, i.e., $\eta=K$, and the pilot matrix satisfies $\mathbf{\Phi}\mathbf{\Phi}^H=\mathbf{I}_K$, we have
\begin{equation}\label{cor1_1}
	\bar{\mathbf{C}}_{\boldsymbol{x}}=Kp\mathbf{I}_K=\bar{\mathbf{D}}_{\boldsymbol{x}},
\end{equation}	
and 
\begin{equation}\label{cor1_2}
	\bar{\mathbf{C}}_{\boldsymbol{q}_t}=\left(1-\frac{2}{\pi}\right)\mathbf{I}_K,
\end{equation}
which yields
\begin{equation}\label{cor1_3_1}
	\sigma_{\hat{g}_k}^{2}=\frac{\frac{Kp}{\sigma_n^2}+\varsigma(p)}{1+\frac{Kp}{\sigma_n^2}+\varsigma(p)}\beta_k~~and~~\sigma_{{\varepsilon}_k}^{2}=\frac{1}{1+\frac{Kp}{\sigma_n^2}+\varsigma(p)}\beta_k,
\end{equation}
where 
\begin{equation}\label{cor1_3_2}
	\varsigma(p)= \frac{\left(\frac{M}{N}-1\right)}{\frac{\pi}{2}\frac{\sigma_n^2}{Kp}+\left(\frac{\pi}{2}-1\right)\left(\frac{M}{N}-1\right)}.
\end{equation}
In addition,
\begin{equation}\label{cor1_3_3}
	w_{\infty}=\frac{\frac{Kp}{\sigma_n^2}}{1+\frac{Kp}{\sigma_n^2}+\varsigma(p)}~~~and~~~w_{1}=\frac{\left(\frac{M}{N}-1\right)^{-1}\varsigma(p)}{1+\frac{Kp}{\sigma_n^2}+\varsigma(p)},
\end{equation}
where $w_{\infty}$ and $w_{1}$ denote the weights of the high-resolution and one-bit observations in the LMMSE estimation, respectively.
\end{cor}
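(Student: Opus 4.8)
The plan is to specialize each intermediate quantity of Theorem~\ref{theo3} to the stated regime and then substitute into the closed forms (\ref{ghat variance3})--(\ref{error variance3}) and (\ref{weight1})--(\ref{weight2}); every step is a direct evaluation once the correlation matrices are diagonalized. First I would establish the signal autocorrelation identity. Inserting the power-control rule $p_k=p/\beta_k$ and $\eta=K$ into the noise-free autocorrelation (\ref{covar_new2}) gives $\bar{\mathbf{C}}_{\boldsymbol{x}}=Kp\sum_{k}\boldsymbol{\phi}_k^{*}\boldsymbol{\phi}_k^{T}=Kp(\mathbf{\Phi}\mathbf{\Phi}^{H})^{*}$, which collapses to $Kp\,\mathbf{I}_K$ under the hypothesis $\mathbf{\Phi}\mathbf{\Phi}^{H}=\mathbf{I}_K$; since this matrix is already diagonal, $\bar{\mathbf{D}}_{\boldsymbol{x}}=\bar{\mathbf{C}}_{\boldsymbol{x}}=Kp\,\mathbf{I}_K$, which is the first displayed identity.

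Next I would feed this into the arcsine expression (\ref{covar_new}). Because $\bar{\mathbf{D}}_{\boldsymbol{x}}^{-1/2}\bar{\mathbf{C}}_{\boldsymbol{x}}\bar{\mathbf{D}}_{\boldsymbol{x}}^{-1/2}=\mathbf{I}_K$, the element-wise arcsine acts on a matrix whose diagonal entries are $1$ and whose off-diagonal entries are $0$; using $\mathrm{arcsin}(1)=\pi/2$ and $\mathrm{arcsin}(0)=0$ reduces $\mathrm{arcsin}\{\mathbf{I}_K\}$ to $(\pi/2)\mathbf{I}_K$, and the two terms of (\ref{covar_new}) combine to give $\bar{\mathbf{C}}_{\boldsymbol{q}_t}=(1-2/\pi)\mathbf{I}_K$. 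This is the step I would watch most carefully: the arcsine law must be applied entrywise, and it is essential to keep the noise-free diagonal $\bar{\mathbf{D}}_{\boldsymbol{x}}$ used here distinct from the noisy diagonal $\mathbf{D}_{\boldsymbol{x}}=(Kp+\sigma_n^{2})\mathbf{I}_K$ inherited from Corollary~\ref{corol1}.

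With the matrices in hand, the scalar parameters follow by substitution. From Corollary~\ref{corol1} one already has $\mathbf{D}_{\boldsymbol{x}}=(Kp+\sigma_n^{2})\mathbf{I}_K$ and $\mathbf{C}_{\boldsymbol{q}_t}=(1-2/\pi)\mathbf{I}_K$, so evaluating (\ref{sigma_w}) through the effective pilot (\ref{effective pilot}) together with $\boldsymbol{\phi}_k^{T}\boldsymbol{\phi}_k^{*}=\|\boldsymbol{\phi}_k\|^{2}=1$ yields $\sigma_{w_k}^{2}=\beta_k\bigl[(\pi/2-1)+(\pi/2)\sigma_n^{2}/(Kp)\bigr]$, while the analogous reduction of the cross term (\ref{covariance}) with the noise-free quantities gives $\varrho_k=\beta_k(\pi/2-1)$. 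Inserting these into (\ref{varsigma_def}) and factoring $\beta_k$ out of the denominator produces the user-independent quantity $\varsigma(p)=\beta_k\varsigma_k(p_k)$ of (\ref{cor1_3_2}); the disappearance of $\beta_k$ is exactly what makes $\varsigma$, and hence the weights, independent of $k$.

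Finally I would assemble the variance and weight expressions. Multiplying the numerator and denominator of (\ref{ghat variance3}), (\ref{error variance3}), (\ref{weight1}), and (\ref{weight2}) by $\beta_k$ and using the two identities $\beta_k\,(\eta p_k/\sigma_n^{2})=Kp/\sigma_n^{2}$ and $\beta_k\varsigma_k(p_k)=\varsigma(p)$ turns the $1/\beta_k$ term in each denominator into a clean $1$ and reproduces (\ref{cor1_3_1}) and (\ref{cor1_3_3}) verbatim. The only genuine obstacle is bookkeeping: one must apply the arcsine entrywise and keep the barred (noise-free) and unbarred (noisy) diagonals separate, so that $\sigma_{w_k}^{2}$ and $\varrho_k$ carry the correct powers of $(1+\sigma_n^{2}/Kp)$; with that distinction respected, the remainder is routine algebraic simplification.
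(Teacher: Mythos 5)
Your proposal is correct and takes the same route the paper intends: Corollary~\ref{corol2} is a direct specialization of Theorem~\ref{theo3}, and your evaluations $\bar{\mathbf{C}}_{\boldsymbol{x}}=Kp\,\mathbf{I}_K=\bar{\mathbf{D}}_{\boldsymbol{x}}$, $\bar{\mathbf{C}}_{\boldsymbol{q}_t}=\left(1-\frac{2}{\pi}\right)\mathbf{I}_K$, $\sigma_{w_k}^{2}=\beta_k\left(\frac{\pi}{2}\frac{\sigma_n^2}{Kp}+\frac{\pi}{2}-1\right)$, and $\varrho_k=\beta_k\left(\frac{\pi}{2}-1\right)$ reproduce (\ref{cor1_3_1})--(\ref{cor1_3_3}) verbatim after multiplying numerators and denominators by $\beta_k$ so that $\beta_k\varsigma_k(p_k)=\varsigma(p)$. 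You also correctly resolved the one genuinely delicate point, namely that the cross term (\ref{covariance}) must be evaluated with the noise-free normalization $\bar{\mathbf{D}}_{\boldsymbol{x}}$ (a literal reuse of (\ref{effective pilot}) with the noisy $\mathbf{D}_{\boldsymbol{x}}$ would instead give $\varrho_k=\beta_k\left(\frac{\pi}{2}-1\right)\left(1+\frac{\sigma_n^2}{Kp}\right)$ and a different $\varsigma$), which is exactly the interpretation consistent with the paper's stated $\varsigma(p)$ and its high-SNR limit $\varsigma\rightarrow\frac{1}{\frac{\pi}{2}-1}$.
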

Corallary \ref{corol2} states that in contrast to Theorem 1 where the correlation among one-bit observations within each training sub-interval can be eliminated by carefully selecting the system parameters as in Corollary \ref{corol1}, we cannot overcome the correlation among one-bit observations from different training sub-intervals. This phenomenon makes the addition of the one-bit observations less useful especially in the high SNR regime. For instance, in the asymptotic case, as the $\text{SNR}=\frac{p}{\sigma_n^2}$ goes to infinity, we have
\begin{equation}\label{after_cor_2_1} 
\varsigma\longrightarrow\frac{1}{\frac{\pi}{2}-1}	,
\end{equation}
\begin{equation}\label{after_cor_2_2} 
w_{\infty}\longrightarrow 1,~w_1\longrightarrow 0.
\end{equation}
It is apparent from (\ref{after_cor_2_1}) that in the asymptotic regime $\varsigma$ tends to a finite value and also is independent of $M/N$. Moreover, (\ref{after_cor_2_2}) implies that the optimal approach for high SNRs is to estimate the channel based solely on the high-resolution observations.  
\begin{figure}
\centering
\includegraphics[width=0.5\textwidth]
{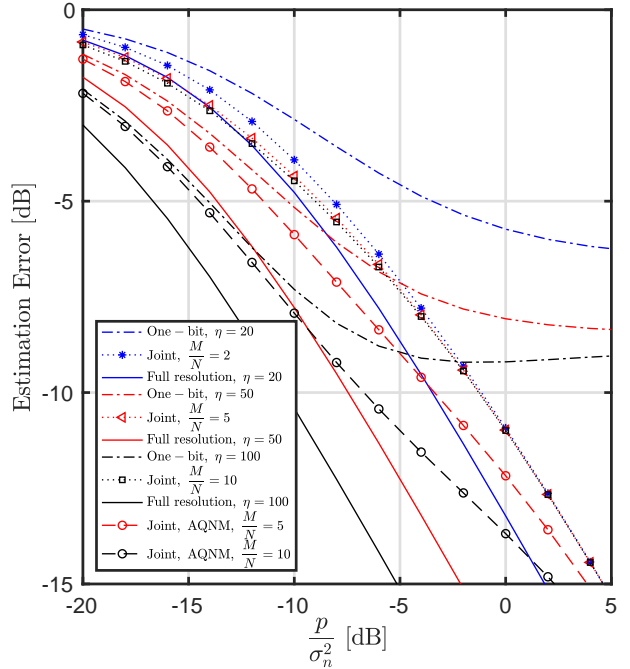}
\caption{Channel estimation error $\sigma_{{\varepsilon}_k}^{2}/\beta_k$ versus ${p}/{\sigma_n^2}$.}
\label{fig1}
\end{figure}

The error for the three channel estimation approaches in Eqs. (\ref{ghat variance}), (\ref{ghat variance2}), and (\ref{error variance3})  is depicted in Fig. \ref{fig1} for a case with $M=100$ antennas, $K=10$ users, and various numbers of high-resolution ADCs, $N$ and training lengths $\eta$. The label ``Joint'' refers to round-robin channel estimation that includes the one-bit observations as described in the previous section, "Full resolution" indicates the performance achieved using a full array of high-resolution ADCs, and ``One-bit'' refers to the performance of an all-one-bit architecture. We also plot the performance predicted for the Joint approach based on the AQNM analysis, which ignores the correlation among the one-bit observations. We see that the AQNM-based analysis yields an overly optimistic prediction for the channel estimation error. In particular, unlike AQNM, the more accurate Bussgang analysis shows that channel estimation with all an one-bit BS actually outperforms the Joint method for low SNRs, a critical observation in analyzing whether or not a mixed-ADC implementation makes sense.  However, we see that the mixed-ADC architecture eventually overcomes the error floor of the all one-bit system for high SNRs and in such cases can reduce the estimation error dramatically. Fig. \ref{fig1} focuses on channel estimation performance, but does not reflect the full impact of the round-robin training on the overall system spectral efficiency, since reducing $N$ increases the amount of training required by the round-robin method. This will be taken into account when we analyze the SE in Section \ref{sec:spectral efficiency}.

\section{Practical Considerations}\label{sec:Practical}
The improvement in channel estimation performance provided by the round-robin training clearly comes at the expense of a significantly increased training overhead. For example, consider a simple worst-case example with a 400 Hz Doppler spread in a narrowband channel of 400 kHz bandwidth; in this case, the coherence time is roughly 1000 symbols. For higher bandwidths or smaller cells with lower mobility, the coherence time can easily approach 10,000 symbols or more. A mixed-ADC array of 128 antennas with 16 high-resolution ADCs would require repeating the pilots 8 times, which for 20 users would amount to 160 symbols, or 16\% of the coherence time when $T=1000$ symbols. This is a relatively high price to pay, and as we will see later, in many instances the resulting loss in SE cannot be offset by the improved channel estimate. However, we will also see that on the other hand, there are other situations where the opposite is true, where the round-robin method leads to significant gains in SE even taking the training overhead into account.

Besides the extra training overhead, the round-robin method has the disadvantage of requiring extra RF switching or multiplexing hardware prior to the ADCs, as shown in Fig.~ \ref{system_model_fig}. It is unlikely that a single large $M\times M$ multiplexer would be used for this purpose, since complete flexibility in assigning a given high-resolution ADC to any possible antenna is not needed. A more likely architecture would employ a bank of smaller multiplexers that allows one high-resolution ADC to be switched among a smaller subarray of antennas, ensuring that each RF chain has access to high-resolution training data during one of the round-robin intervals. Such an approach is similar to the simplified ``subarray switching'' schemes proposed for antenna selection in massive MIMO \cite{Garcia}-\cite{Gao2}. In an interesting earlier example, a large $108\times 108$ multiplexer chipset for a local area network application was developed in \cite{Lefevre}, composed of several $36\times 36$ differential crosspoint ASIC switches that consume less than 100 mW each, with a bandwidth of 140 MHz and a 0 dB insertion loss.

In the 20 years since \cite{Lefevre}, RF switch technology has advanced considerably. For the example discussed above involving a 128-element array with 16 high-resolution ADCs and 112 one-bit ADCs, the multiplexing could be achieved using 16 $8\times 8$ analog switches arranged in parallel. Consider the Analog Devices ADV3228 $8\times 8$ crosspoint switch as an example of an off-the-shelf component for such an architecture\footnote{See http://www.analog.com/en/products/switches-multiplexers/buffered-analog-crosspoint-switches/adv3228.html$\#$product-overview for product details.\color{black}}. The ADV3228 has a 750 MHz bandwidth, a switching time of 15 ns, and a power consumption of 500 mW, which is similar to that of an 8-bit ADC (for example, see Texas Instruments' ADC08B200 8-bit 200 MS/s ADC\footnote{http://www.ti.com/product/ADC08B200/technicaldocuments.\color{black}}). Since the switches can be implemented at a lower intermediate frequency prior to the I-Q demodulation, only one per subarray is required, and thus the total power consumption of the switches would be less than half that of the ADCs.

Note that for the vast majority of the coherence time, the switch is idle. To accommodate the round-robin training, the switches only need to be operated $\frac{M}{N}-1$ times, once for every repetition of the training data. This reduces the actual power consumption to below the specification, and further reduces the impact of the additional training. Short guard intervals would need to be inserted between the training intervals to account for
the switching transients, but these will typically not impact the SE. For the example discussed above with 128 antennas and 8 switches, 7 switching events are required for a total switching time of 105 ns, which is insignificant compared to the coherence time of 2.5 ms at a 400 Hz Doppler.

The insertion loss of the analog switches would also have to be taken into account in an actual implementation, since this will directly reduce the overall SNR of the received signals. Harmonic interference due to nonlinearities in the switch are likely not an issue; for example, the specifications for a Texas Instruments switch (LMH6583) similar to the ADV3228 indicate that the power of the second and third harmonic distortions were -76 dBc. Furthermore, it has been shown that the use of signal combining with a massive antenna array provides significant robustness to such nonlinearities and other hardware imperfections \cite{Emil1}-\cite{Mollen4}.

\color{black} 
\section{Spectral Efficiency}\label{sec:spectral efficiency}
Although channel estimation with a mixed-ADC architecture using round-robin training can substantially improve the channel estimation accuracy, it requires a longer training interval and, therefore, leaves less room for data transmission in each coherence interval. More precisely, $(M/N)\eta$ symbol transmissions are required for round-robin channel estimation which could be large when the number of high-resolution ADCs, $N$, is small\footnote{Note that in designing a mixed-ADC system with round-robin channel training, one should
consider the ratio $M/N$ in scaling the system instead of just increasing the number of antennas $M$. In particular, increasing the number of BS antennas requires increasing of the high-resolution
ADCs,
$N$, as well.\color{black}}. Despite losing a portion of the coherence interval for channel estimation due to the mixed-ADC architecture, the improvement in the signal-to-quantization-interference-and-noise ratio (SQINR) can be significant owing to more accurate channel estimation, and thus a higher rate would be expected during this shorter data transmission period. In this section, we study this system performance trade-off in terms of spectral efficiency for the three mentioned channel estimation approaches.

In the data transmission phase, all users simultaneously send their data symbols to the BS. To begin, assume the antennas are ordered so that the last $N$ antennas are connected to high-resolution ADCs in this phase. A more thoughtful assignment of the high-resolution ADCs will be considered below. From equation (\ref{channel model}), and based on the Bussgang decomposition, the received signal at the BS after one-bit quantization is
\begin{equation}\label{received data}
	\boldsymbol{y}_d =
	{\begin{bmatrix}
        \sqrt{\frac{2}{\pi}}\bar{\mathbf{D}}^{-\frac{1}{2}} & \mathbf{0} \\
        \mathbf{0} & \mathbf{I}_{N}
\end{bmatrix}}
\boldsymbol{r}+\underbrace{\begin{bmatrix}
        \bar{\boldsymbol{q}}_{d} \\
        \mathbf{0}
\end{bmatrix}}_{{\boldsymbol{q}}_{d}}
\end{equation}
\begin{equation}\label{diag_correlaion_data}
	\bar{\mathbf{D}}=\mathrm{diag}\{\mathbf{C}_{\boldsymbol{r}}\}
\end{equation}
\begin{equation}\label{correlaion_data}
	\mathbf{C}_{\boldsymbol{r}}=\sum_{k=1}^{K}{p_k\bar{\boldsymbol{g}}_k\bar{\boldsymbol{g}}_k^H}+\sigma_n^2\mathbf{I}_{M-N},
\end{equation}
where $\bar{\boldsymbol{g}}_k$ denotes the $M-N$ elements of $\boldsymbol{g}_k$ corresponding to the $M-N$ one-bit ADCs and ${\bar{\boldsymbol{q}}_{d}}$ is the $(M-N)\times 1$ quantization noise in the data transmission phase.  It is apparent that the covariance matrix in (\ref{correlaion_data}) is not diagonal which makes analytical tractability difficult. However, by adopting statistics-aware power control \cite{Emil Bj}\color{black}, i.e., $p_k=\frac{p}{\beta_k}$, and assuming that the number of users is relatively large (typical for massive MIMO systems), channel hardening occurs \cite{Li}, and (\ref{correlaion_data}) can be approximated as 
\begin{equation}\label{correlaion_data_approx}
	\mathbf{C}_{\boldsymbol{r}}\cong \left(Kp+\sigma_n^2\right)\mathbf{I}_{M-N}=\bar{\mathbf{D}}.
\end{equation}
As a result, according to the arcsine law (see (\ref{quantization noise autocorrelation})), the covariance matrix of the quantization noise in the data transmission phase becomes $\mathbf{C}_{\bar{\boldsymbol{q}}_d}\cong (1-2/\pi)\mathbf{I}_{M-N}$ and (\ref{received data}) simplifies to
\begin{equation}\label{received data_simplified}
	\boldsymbol{y}_d\cong \mathbf{A}\left(\sum_{k=1}^{K}{\sqrt{p}\boldsymbol{h}_ks_k}+\boldsymbol{n}\right)+\boldsymbol{q}_{d}
\end{equation}
\[
\mathbf{A}={\begin{bmatrix}
        \alpha\mathbf{I}_{M-N} & \mathbf{0} \\
        \mathbf{0} & \mathbf{I}_{N}
\end{bmatrix}},
\]
where $\alpha\triangleq\sqrt{\frac{2}{\pi}\frac{1}{\left(Kp+\sigma_n^2\right)}}$.

For data detection, the BS selects a linear receiver ${\boldsymbol{W}}\in\mathbb{C}^{M\times K}$ as a function of the channel estimate. Note that the quantization model considered in (\ref{quantized training1}) and (\ref{quantized training2}) does not preserve the power of the input of the quantizer since the power of the output is forced to be ${1}$. Thus we premultiply the received signal as follows to offset this effect:
\begin{equation}\label{received signal}
  {\hat{\boldsymbol{y}}_d}=\mathbf{A}^{-1}
  \boldsymbol{y}_{d}.
\end{equation}
By employing the linear detector $\mathbf{W}$, the resulting signal at the BS is
\begin{equation}\label{received signal}
  \mathit{\hat{\boldsymbol{s}}}={{\boldsymbol{W}}}^H
  \hat{\boldsymbol{y}}_{d}.
\end{equation}
Thus, the $k$th element of $\hat{\mathit{\boldsymbol{s}}}$ is
\begin{multline}\label{kth user received signal}
  \hat{s}_k=\sqrt{p}{{\mathit{\boldsymbol{w}}}_{k}^{H}{\mathit{\boldsymbol{h}}}_{k}}
  s_k
  +\sqrt{p}\sum_{i=1,i\ne k}^{K}{{{\mathit{\boldsymbol{w}}}_{k}^{H}{\mathit{\boldsymbol{h}}}_{i}}s_i}\\
  +{{\mathit{\boldsymbol{w}}}_k^{H}\mathit{\boldsymbol{n}}}
  +{\mathit{\boldsymbol{w}}}_k^{H}\mathbf{A}^{-1}{\mathit{\boldsymbol{q}_d}},
\end{multline}
where ${\mathit{\boldsymbol{w}}}_k$ is the $k$th column of ${\boldsymbol{W}}$. We assume the BS treats ${\mathit{\boldsymbol{w}}}_{k}^{H}{\mathit{\boldsymbol{h}}}_{k}$ as the gain of the desired signal and the other terms of (\ref{kth user received signal}) as Gaussian noise when decoding the signal\footnote{Note that in general, the quantization noise is not Gaussian. However, to derive a lower bound for the SE, we assume it is Gaussian with covariance $\mathbf{C}_{\boldsymbol{q}_d}$.}.
Consequently, we can use the classical bounding technique of \cite{Emil Bj} to derive an approximation for the ergodic achievable SE at the $k$th user as\color{black}
\begin{equation}\label{SE definition}
  \mathcal{S}_k=\mathcal{R}\left(\mathrm{SQINR}_k\right),
\end{equation}
where the effective $\text{SQINR}_k$ is defined by (\ref{uplink achievable rate}) at the top of the next page, and $\mathcal{R}\left(\theta\right)\triangleq\left(1-\eta_{\mathrm{eff}}/T\right)\mathrm{log}_2\left(1+\theta\right)$ where $\eta_{\mathrm{eff}}$ represents the training duration which is $\eta$ and $\left(M/N\right)\eta$ for the pure one-bit and mixed-ADC architectures, respectively. 
\begin{figure*}[!t]
\begin{equation}\label{uplink achievable rate}
\mathrm{SQINR}_k=
\frac{p\left|{\mathbb{E}\left\{{{\mathit{\boldsymbol{w}}}_{k}^{H}{\mathit{\boldsymbol{h}}}_{k}}\right\}}\right|^{2}}{p\sum_{i=1}^{K}{\mathbb{E}\left\{\left|
  {\mathit{\boldsymbol{w}}}_{k}^{H}{\mathit{\boldsymbol{h}}}_{i}\right|^{2}\right\}}
-p\left|\mathbb{E}\left\{{{\mathit{\boldsymbol{w}}}_{k}^{H}{\mathit{\boldsymbol{h}}}_{k}}\right\}\right|^{2}
+\sigma_n^2\mathbb{E}\left\{{\|{\boldsymbol{w}}_k\|^2}\right\}+\alpha^{-2}\mathbb{E}\left\{{{\boldsymbol{w}}}_{k}^{H}\mathbf{C}_{\boldsymbol{q}_d}{\boldsymbol{w}}_{k}\right\}
}
\end{equation}
\hrulefill
    \vspace*{4pt}
    \end{figure*}
\subsection{ MRC\ Detection}\label{sec:MRC_det}

\subsubsection{ Random\ Mixed-ADC\ Detection}\label{sec:Random_Mix}

In this subsection, we consider the case in which the high-resolution ADCs are connected to an arbitrary set of $N$ antennas. 
Denoting the estimate of the channel by $\hat{\mathbf{H}}=[\hat{\boldsymbol{h}}_1,...,\hat{\boldsymbol{h}}_K]$, setting $\mathbf{W}=\hat{\mathbf{H}}$, and following the same reasoning as in \cite{Li}, the SE of the mixed-ADC architecture with MRC detection can be derived as
\begin{equation}\label{SE formulaMRC_mix}
	\mathcal{S}_k^{\mathrm{MRC}}=\mathcal{R}\left(\frac{pM\sigma_{\hat{h}}^{2}}{pK+\sigma_n^2+\frac{\left(1-\frac{2}{\pi}\right)}{\alpha^2}\left(1-\frac{N}{M}\right)}\right),
\end{equation}
where the channel estimate variance $\sigma_{\hat{h}}^{2}=\sigma_{\hat{g}_k}^{2}/\beta_k$ depends on the estimation approach as denoted in (\ref{ghat variance}), (\ref{ghat variance2}), and (\ref{ghat variance3}). 

From (\ref{SE formulaMRC_mix}), it can be observed that the gain of exploiting the mixed-ADC architecture is manifested in the SE expressions by two factors, channel estimation improvement by a factor of $\sigma_{\hat{h}}^{2}$, and quantization noise reduction by a factor of $1-N/M$. 


\subsubsection{ Mixed-ADC\ Detection\ with Antenna\ Selection}\
 Having an accurate channel estimate can help us to employ the $N$ high-resolution ADCs in an intelligent manner to further improve the performance of the mixed-ADC architecture. A careful look at the SQINR expression in (\ref{uplink achievable rate}) reveals that the effect of one-bit quantization on the SE is manifested by the last term of the denominator. Hence, one can maximize the SE by minimizing this term through smart use of the $N$ high-resolution ADCs. We refer to this approach as \emph{Mixed-ADC with Antenna Selection}. We consider an antenna selection scheme suggested by the SQINR expression in (\ref{uplink achievable rate}).
In this approach, the $N$ high-resolution ADCs are connected to the antennas corresponding to rows of $\hat{\mathbf{H}}$ with the largest energy, i.e. $\sum_{k=1}^{K}{\left|\hat{h}_{mk}\right|^2}$.
Besides numerical evaluation in Section \ref{sec:Simulation}, in Theorem \ref{smart_mrc} we derive a bound for the SE achieved by MRC detection with antenna selection.
 \begin{theo}\label{smart_mrc}
 	The spectral efficiency of the mixed-ADC system with antenna selection and an MRC receiver is lower bounded by
 	\begin{equation}\label{MRC_smart}
 	\bar{\mathcal{S}}_k^{\mathrm{MRC}}=	\mathcal{R}\left(\frac{pM\sigma_{\hat{h}}^{2}}{pK+\sigma_n^2+\frac{\left(1-\frac{2}{\pi}\right)}{MK\alpha^2}{\left({\sum_{m=1}^{M-N}{\chi_m}}\right)}}\right),
 	\end{equation}
 	where $\chi_m$ is defined at the top of the next page, and $\mathcal{F}_A$ denotes the Lauricella function of type A \cite{Shi}.
 	\begin{figure*}[!t]
\begin{multline}\label{gamma order stat}
	\chi_m=
	\frac{M\,!}{\left(m-1\right)\,!\left(M-m\right)\,!}\sum_{\ell=0}^{M-m}
	(-1)^{-\ell}\binom {M-m}{\ell}\left(\Gamma(K)\right)^{-m-\ell}K^{1-m-\ell}\Gamma\left(1+K\left(m+\ell\right)\right)\\
	\qquad 
	\times \mathcal{F}_{A}^{(m+\ell-1)}\left(1+K\left(m+\ell\right);K,\cdots,K;K+1,\cdots,K+1;-1,\cdots,-1 \right)
\end{multline}
\hrulefill
    \vspace*{4pt}
    \end{figure*}
 \end{theo}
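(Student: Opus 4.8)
The plan is to begin from the SE lower bound encoded in $\mathrm{SQINR}_k$ of (\ref{uplink achievable rate}) with the MRC weight $\boldsymbol{w}_k=\hat{\boldsymbol{h}}_k$, and to isolate the single term that antenna selection affects. The desired-signal factor $p|\mathbb{E}\{\hat{\boldsymbol{h}}_k^H\boldsymbol{h}_k\}|^2$, the interference $p\sum_i\mathbb{E}\{|\hat{\boldsymbol{h}}_k^H\boldsymbol{h}_i|^2\}$, and the noise term $\sigma_n^2\mathbb{E}\{\|\hat{\boldsymbol{h}}_k\|^2\}$ are sums over all $M$ antennas, and their expectations do not depend on which $N$ antennas are connected to the high-resolution ADCs; they therefore retain exactly the values that produced the numerator $pM\sigma_{\hat h}^2$ and the terms $pK+\sigma_n^2$ of (\ref{SE formulaMRC_mix}). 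Under the channel-hardening and power-control simplification $\mathbf{C}_{\boldsymbol{q}_d}\cong(1-\tfrac{2}{\pi})\mathbf{I}$ used in (\ref{received data_simplified}), only the quantization term depends on the assignment, and it equals $\alpha^{-2}(1-\tfrac{2}{\pi})\,\mathbb{E}\{\sum_{m\in\mathcal{B}}|\hat{h}_{mk}|^2\}$, where $\mathcal{B}$ is the random set of the $M-N$ antennas having the smallest row energy $E_m=\sum_{k'=1}^{K}|\hat{h}_{mk'}|^2$. The proof thus reduces to evaluating this one expectation and dividing it by $M\sigma_{\hat h}^2$.

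Second, I would decouple the user index $k$ from the selection event by a symmetry argument. Modelling the estimates as i.i.d. $\hat{h}_{mk}\sim\mathcal{CN}(0,\sigma_{\hat h}^2)$ with $\sigma_{\hat h}^2=\sigma_{\hat g_k}^2/\beta_k$ from (\ref{ghat variance}), (\ref{ghat variance2}), or (\ref{ghat variance3}), the membership of antenna $m$ in $\mathcal{B}$ depends only on the row sums $E_m$, which are symmetric in the user index. Relabelling the users is therefore a measure-preserving symmetry, so $\mathbb{E}\{\sum_{m\in\mathcal{B}}|\hat{h}_{mk}|^2\}$ takes the same value for every $k$; summing over $k=1,\dots,K$ turns the integrand into $\sum_{m\in\mathcal{B}}E_m$, whence $\mathbb{E}\{\sum_{m\in\mathcal{B}}|\hat{h}_{mk}|^2\}=\tfrac1K\,\mathbb{E}\{\sum_{m\in\mathcal{B}}E_m\}$. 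Writing $E_m=\sigma_{\hat h}^2\gamma_m$ with $\gamma_m$ i.i.d. $\mathrm{Gamma}(K,1)$ (a sum of $K$ unit-mean exponentials), the one-bit set collects exactly the $M-N$ smallest order statistics, so $\mathbb{E}\{\sum_{m\in\mathcal{B}}E_m\}=\sigma_{\hat h}^2\sum_{m=1}^{M-N}\mathbb{E}\{\gamma_{(m)}\}$. Identifying $\chi_m\triangleq\mathbb{E}\{\gamma_{(m)}\}$ and substituting back reproduces the denominator term $\frac{(1-2/\pi)}{MK\alpha^2}\sum_{m=1}^{M-N}\chi_m$ of (\ref{MRC_smart}).

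The remaining and principal task is to obtain $\chi_m=\mathbb{E}\{\gamma_{(m)}\}$ in closed form. I would use the order-statistic density $f_{(m)}(x)=\frac{M!}{(m-1)!(M-m)!}F(x)^{m-1}(1-F(x))^{M-m}f(x)$, with $f,F$ the $\mathrm{Gamma}(K,1)$ density and distribution, expand $(1-F)^{M-m}$ by the binomial theorem into $\sum_{\ell=0}^{M-m}\binom{M-m}{\ell}(-1)^{\ell}F(x)^{m+\ell-1}f(x)$, and thereby reduce $\chi_m$ to the integrals $\int_0^\infty x\,F(x)^{m+\ell-1}f(x)\,dx$. Each such integral is a power of the regularized lower incomplete gamma function weighted by $x^{K}e^{-x}$; invoking the integral identity of \cite{Shi} expresses it through the Lauricella function $\mathcal{F}_A$ of order $m+\ell-1$, and collecting the prefactors $(\Gamma(K))^{-(m+\ell)}$, $K^{1-m-\ell}$ and $\Gamma(1+K(m+\ell))$ yields exactly (\ref{gamma order stat}). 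The hard part is this last evaluation: turning $\int_0^\infty x\,[\gamma(K,x)]^{\,m+\ell-1}\,x^{K-1}e^{-x}\,dx$ into a Lauricella $\mathcal{F}_A$ and checking the resulting multi-variable series, which is precisely where the machinery of \cite{Shi} is essential. The ``lower bound'' qualifier is inherited from the use-and-then-forget bound (\ref{uplink achievable rate}), which already lower-bounds the ergodic rate; the quantization expectation above is computed exactly, so no additional slack is introduced in passing to (\ref{MRC_smart}).
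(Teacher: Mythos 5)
Your proposal is correct and follows essentially the same route as the paper's Appendix B: the same isolation of the quantization term $\alpha^{-2}\mathbb{E}\{\boldsymbol{w}_k^H\mathbf{C}_{\boldsymbol{q}_d}\boldsymbol{w}_k\}$ as the only selection-dependent quantity, the same user-symmetry identity $K\,\mathbb{E}\{\hat{\boldsymbol{h}}_k^H\mathbf{C}_{\boldsymbol{q}_d}\hat{\boldsymbol{h}}_k\}=\left(1-\frac{2}{\pi}\right)\sum_{\mathcal{M}\backslash\mathcal{N}}\mathbb{E}\{\mathcal{E}_m\}$, and the same reduction to the $M-N$ smallest order statistics of i.i.d.\ Gamma$(K,1)$ row energies with $\chi_m=\mathbb{E}\{\gamma_{(m)}\}$. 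The only (immaterial) difference is that where the paper simply cites the closed-form gamma order-statistic moment of \cite{Nadarajah}, you sketch its derivation via the binomial expansion of $\left[1-F(x)\right]^{M-m}$ and the Lauricella $\mathcal{F}_A$ integral identity of \cite{Shi}, which is precisely the computation underlying the cited result.
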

 \begin{proof}
See Appendix B.	
\end{proof}
The lower bound (\ref{MRC_smart}) explicitly reflects the benefit of antenna selection in the data transmission phase. By comparing (\ref{MRC_smart}) with (\ref{SE formulaMRC_mix}), it is evident that antenna selection has improved the SE by replacing $1-N/M$ by $\frac{\sum_{m=1}^{M-N}{\chi_m}}{MK}$. In Section \ref{sec:Simulation} we illustrate how antenna selection improves SE for different SNRs. Note that Theorem \ref{smart_mrc} assumes the ability to make an arbitrary assignment of the high-resolution ADCs to different RF chains, which may not be possible if the ADC multiplexing is implemented by a bank of subarray switches. In the numerical results presented later, we show that this does not lead to a significant degradation in performance.\color{black}
\subsection{ ZF\ Detection}\label{sec:ZF_det}
In this section, we study the SE of the mixed-ADC architecture with ZF detection. 
To design a mixed-ADC adapted ZF detector, we re-write the last two terms of the denominator of (\ref{uplink achievable rate}) as follows:
\begin{equation}\label{zf_design}
	\boldsymbol{w}_k^H\left(\sigma_n^2\mathbf{I}_M+\alpha^{-2}\mathbf{C}_{\boldsymbol{q}_d}\right)\boldsymbol{w}_k=
	\biggl[\mathbf{W}^H\mathbf{C}_{n_{\mathrm{eff}}}\mathbf{W}\biggr]_{kk},
\end{equation} 
where $\mathbf{C}_{n_{\mathrm{eff}}}=\sigma_n^2\mathbf{I}_M+\alpha^{-2}\mathbf{C}_{\boldsymbol{q}_d}$. Accordingly, the ZF detector for the mixed-ADC architecture can be written as
\begin{equation}\label{ZF_detector_mixed}
	\mathbf{W}=\mathbf{C}_{n_{\mathrm{eff}}}^{-1}\hat{\mathbf{H}}\left(\hat{\mathbf{H}}^H\mathbf{C}_{n_{\mathrm{eff}}}^{-1}\hat{\mathbf{H}}\right)^{-1}.
\end{equation}
Plugging (\ref{ZF_detector_mixed}) into (\ref{uplink achievable rate}) yields (\ref{ZF_SQINR}) at the top of the next page. 
\begin{figure*}[!t]
\begin{equation}\label{ZF_SQINR}
\mathrm{SQINR}_k^{\mathrm{ZF}}=
\frac{p}{pK\left(1-\sigma_{\hat{h}}^{2}\right)\mathbb{E}\left\{\biggl[\left(\hat{\mathbf{H}}^H\mathbf{C}_{n_{\mathrm{eff}}}^{-1}\hat{\mathbf{H}}\right)^{-1}
\hat{\mathbf{H}}^H\mathbf{C}_{n_{\mathrm{eff}}}^{-2}\hat{\mathbf{H}}
\left(\hat{\mathbf{H}}^H\mathbf{C}_{n_{\mathrm{eff}}}^{-1}\hat{\mathbf{H}}\right)^{-1}\biggr]_{kk}\right\}
+
 \mathbb{E}\left\{\biggl[\left(\hat{\mathbf{H}}^H\mathbf{C}_{n_{\mathrm{eff}}}^{-1}\hat{\mathbf{H}}\right)^{-1}\biggr]_{kk}\right\}}
\end{equation}
\hrulefill
    \vspace*{4pt}
    \end{figure*}
Similar to the MRC case, the SQINR in (\ref{ZF_SQINR}) suggests the same antenna selection approach for ZF detection. In general, calculating the expected values in (\ref{ZF_SQINR}) is not tractable neither for arbitrary-antenna mixed-ADC detection nor mixed-ADC with antenna selection. Hence, we numerically evaluate the performance of mixed-ADC with ZF detection in the next section.

\subsection{ Massive\ MIMO\ with\ Uniform\ ADC\ Resolution}\label{sec:pure low}
Contrary to the mixed-ADC architecture where the ADC comparators are concentrated in a few antennas, uniformly spreading the comparators over the array is an alternative approach \cite{Sarajlik,Verenzuela,Fan,Li Fan,Qiao}. In this subsection, we provide the SE expressions for such systems. These expressions will be used in the next section for performance comparisons with the mixed-ADC architecture.

The SE for the case of all one-bit ADCs was derived in \cite{Li} using the Bussgang decomposition. For ADC resolutions of 2 bits or higher, the AQNM model is sufficiently accurate.  Using AQNM and following the same reasoning as in \cite{Fan,Li Fan,Qiao}, the SE of a massive MIMO system with uniform resolution ADCs can be derived as
\begin{multline}\label{SE_pure_MRC}
	\tilde{\mathcal{S}}_k^{\mathrm{MRC}}=\\
	\mathcal{R}\left(\frac{pM\tilde{\sigma}_{\hat{h}}^{2}}{pK+\sigma_n^2+\frac{\left(1-\alpha_{0}\right)}{\alpha_{0}^2}\left(p\left(\tilde{\sigma}_{\hat{h}}^{2}+K\right)+\sigma_n^2\right)}\right)
\end{multline}
\begin{multline}\label{SE_pure_ZF}
	\tilde{\mathcal{S}}_k^{\mathrm{ZF}}=\\
	\mathcal{R}\left(\frac{p\left(M-K\right)\tilde{\sigma}_{\hat{h}}^{2}}{pK\left(1-\tilde{\sigma}_{\hat{h}}^{2}\right)+\sigma_n^2+\frac{\left(M-K\right)\tilde{\sigma}_{\hat{h}}^{2}}{\alpha^2}\mathbb{E}\left\{{{\boldsymbol{w}}}_{k}^{H}\mathbf{C}_{0}{\boldsymbol{w}}_{k}\right\}}\right),
\end{multline}
for MRC and ZF detection, respectively. In (\ref{SE_pure_MRC}) and (\ref{SE_pure_ZF}),
\begin{equation}\label{AQNM_sigma} 
\tilde{\sigma}_{\hat{h}}^{2}=\frac{\alpha_{0}^2\eta p}{\alpha_{0}^2\eta p+\alpha_{0}^2\sigma_n^2+\alpha_{0}\left(1-\alpha_{0}\right)\left(pK+\sigma_n^2\right)},
\end{equation}
 $\alpha_0$ is a scalar depending on the ADC resolution and can be found in Table I of \cite{Fan}, $\boldsymbol{w}_{k}$ is the $k$th column of $\mathbf{W}=\hat{\mathbf{H}}\left(\hat{\mathbf{H}}^H\hat{\mathbf{H}}\right)^{-1}$, and $\mathbf{C}_{0}$ denotes the covariance matrix of the quantization noise based on the AQNM model \cite{Fan}. The detailed calculation of $\mathbb{E}\left\{{{\boldsymbol{w}}}_{k}^{H}\mathbf{C}_{0}{\boldsymbol{w}}_{k}\right\}$ in (\ref{SE_pure_ZF}) is provided in \cite{Qiao} which we do not include here for the sake of brevity.
\color{black}
\section{Numerical Results}\label{sec:Simulation}
By substituting from (\ref{ghat variance}), (\ref{ghat variance2}), and (\ref{ghat variance3}) into (\ref{SE formulaMRC_mix}), (\ref{MRC_smart}), and (\ref{ZF_SQINR}), we can evaluate the performance of mixed-ADC architectures for different system settings. For all of the following experiments, \color{black}we consider a system with  $M=100$ antennas at the BS, and $K=10$ users. Also, we assume the power control approach of \cite{Emil Bj} is used, so that $p_k \beta_k =p$ for all $k$. We also assume that an optimal resource allocation has been performed \cite{Li Fan,Ngo2} such that the training length, $\eta_{\mathrm{eff}}$, transmission power during the training phase, $p_t$, and data transmission phase, $p_d$ are optimized under a power constraint $\eta_{\mathrm{eff}}p_t+(T-\eta_{\mathrm{eff}})p_d=P_{\mathrm{ave}}T$. In the following figures, the SNR is defined as $\mathrm{SNR}\triangleq P_{\mathrm{ave}}/\sigma_n^2$.
\color{black} 
\begin{figure}
\centering
\includegraphics[width=0.5\textwidth]
{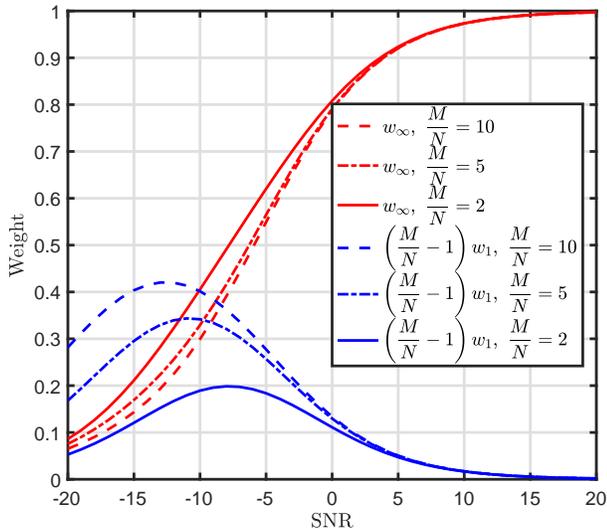}
\caption{Weights used in the LMMSE channel estimator for high-resolution and one-bit observations.}
\label{fig2}
\end{figure}

Fig. \ref{fig2}  illustrates the optimal weights for combining high-resolution and one-bit observations for the joint high-resolution/one-bit LMMSE channel estimation. Interestingly, it can be seen that when $M/N$ is large, the one-bit observations are emphasized in the low SNR regime relative to the high-resolution observations. In addition, in contrast to the weights for the high-resolution observations, which rise monotonically with increasing SNR, the weight for the one-bit observations grows at first and then decreases to zero. 
\begin{figure}[t!]
\centering
\includegraphics[width=0.5\textwidth]
{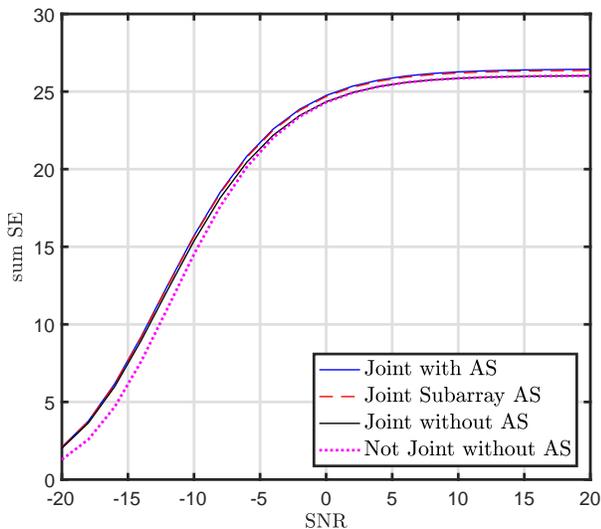}
\caption{Sum SE for MRC detection versus SNR for $M=100$, $N=20$, and $T=400$.}
\label{fig5}
\end{figure}
\begin{figure}[t!]
\centering
\includegraphics[width=0.5\textwidth]
{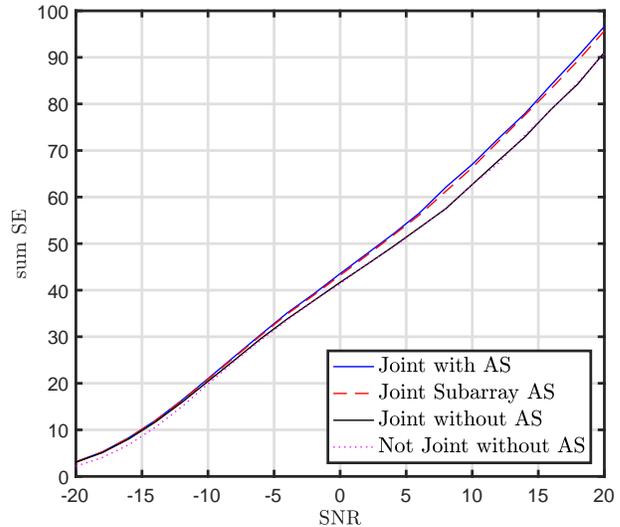}
\caption{Sum SE for ZF detection versus SNR for $M=100$, $N=20$, and $T=400$.}
\label{fig6}
\end{figure}

To study the performance improvement due to joint channel estimation and antenna selection in mixed-ADC massive MIMO, the sum SE for the MRC and ZF detectors for a system with coherence interval $T=400$ symbols and $N=20$ high-resolution ADCs is depicted in Fig. \ref{fig5} and Fig. \ref{fig6}, respectively. In these and subsequent figures, ``Joint with AS'' indicates that the channel estimation was performed with both one-bit and high-resolution ADCs and that antenna selection (AS) was used for data detection, ``Joint without AS'' represents the same case without antenna selection, ``Joint Subarray AS'' means that the antenna selection only occurred within each $M/N$-element subarray (one high-resolution ADC assigned to the strongest channel within each subarray), and ``Not Joint without AS'' represents the case in which channel is estimated based on only high-resolution observations and no antenna selection is employed\color{black}. 
For both MRC and ZF, it can be seen that antenna selection slightly improves the SE for high SNRs, where the channel estimation is most accurate. At low SNR, we see that joint channel estimation provides a gain from the use of one-bit ADCs, which provide useful information at these SNRs. We also see that the constrained AS required when the switching is only performed within subarrays provides nearly identical performance to the case where arbitrary AS is allowed.

 Note that the main reason for the small gain for antenna selection is due to the fact that, with multiple users, selecting a given antenna does not benefit all users simultaneously, and the strong users responsible for a given antenna being selected will in general be different for different antennas. Thus, the improvement due to increased signal-to-noise ratio for some users is somewhat offset by the fact that other users may experience a lower SNR on those same antennas. We would see a much larger benefit for antenna selection if only a single user were present. 
\color{black}

Figs. \ref{fig7} and \ref{fig8} provide a comparison among a mixed-ADC massive MIMO system with joint channel estimation and antenna selection, an all-one-bit architecture (``One-bit''), and a mixed-ADC without round-robin training for which the high-resolution ADCs are connected to a fixed set of antennas without ADC switching or antenna selection (``Non-round-robin'') \cite{Hessam2}. 
Since mixed-ADC channel estimation improves the channel estimation accuracy by expending a larger portion of the coherence interval for training, its benefit is directly related to the length of the coherence interval. For MRC detection, when $T=400$, the mixed-ADC architecture performs better than the all-one-bit architecture for $N=20$, but when $N=10$ the all-one-bit architecture is better due to the larger training overhead incurred when $N$ is smaller.  
However, for $T=1000$, mixed-ADC outperforms the all-one-bit architecture at high SNRs for both $N=10,~20$, while the all-one-bit case is still better for $N=10$ at low SNRs. Round-robin training provides better SE performance at high SNR when $N=20$ compared to the case without antenna switching, especially for the larger coherence interval. However, for other cases, the round-robin training overhead significantly reduces the SE, especially for $N=10$ and the shorter coherence interval. \color{black}


For ZF detection, we see that the mixed-ADC architectures can provide very large gains in SE compared to the one-bit case at high SNRs, regardless of $T$. For low SNRs, there is little to no improvement. These cases still do not show a significant benefit for round-robin training compared with a fixed ADC assignment; only when $N=20$ and $T=1000$ do we see a slight improvement. 
\begin{figure}[t!]
\centering
\includegraphics[width=0.5\textwidth]
{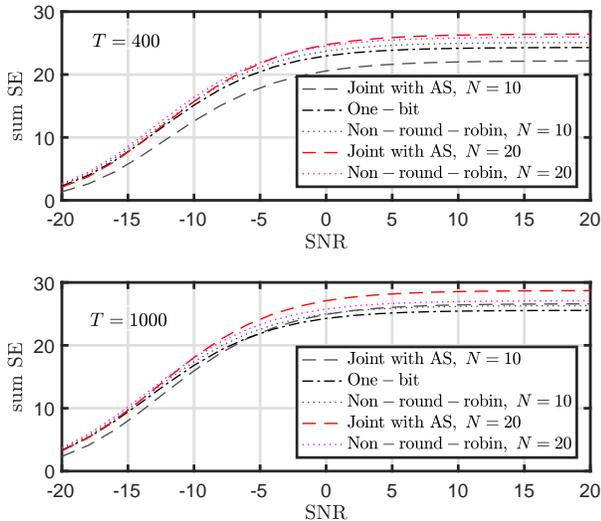}
\caption{Sum SE for MRC detection versus SNR for $M=100$, $N=20,~10$, and $T=400,~1000$.}
\label{fig7}
\end{figure}

\begin{figure}[t!]
\centering
\includegraphics[width=0.5\textwidth]
{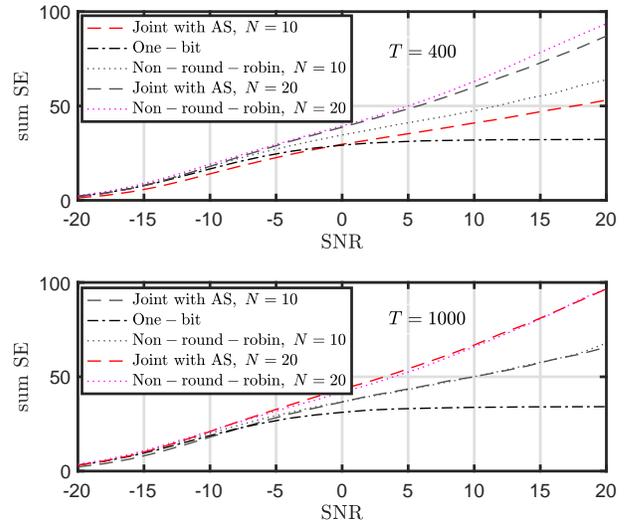}
\caption{Sum SE for ZF detection versus SNR for $M=100$, $N=20,~10$, and $T=400,~1000$.}
\label{fig8}
\end{figure}

For $N=20$, Figs. \ref{fig9} and \ref{fig10} show how the coherence interval $T$ impacts the effectiveness of the mixed-ADC architecture for MRC and ZF detectors, respectively. 
For mixed-ADC MRC detection, it is apparent that the best  choice among the three architectures (all one-bit, mixed-ADC with and without round-robin training) depends on the SNR operating point and the length of the coherence interval. The advantage of round-robin training becomes apparent for long coherence intervals, where the increased training length has a smaller impact. The gain for round-robin training is greatest at higher SNRs. For shorter coherence intervals, mixed ADC with fixed antenna/ADC assignments provides the best SE, with the largest gains again coming at higher SNRs. For this value of $N$, the all-one-bit system generally has the lowest SE, although the difference is not large for MRC. 


\begin{figure}[t!]
\centering
\includegraphics[width=0.5\textwidth]
{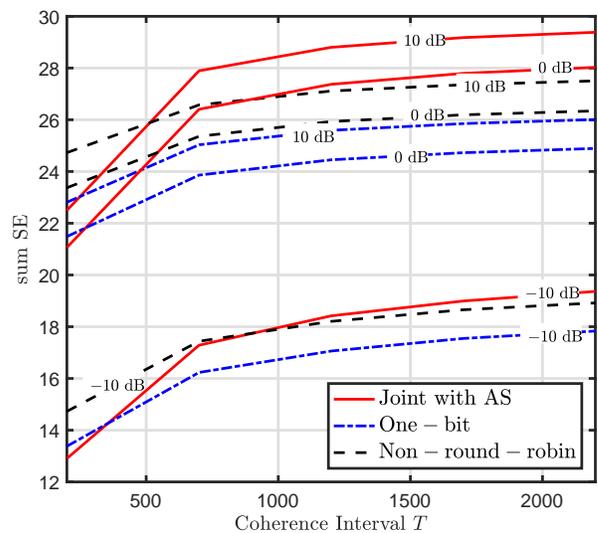}
\caption{Sum SE for MRC detection versus $T$ for $M=100$, $N=20$, and $SNR=-10, 0, 10$ dB.}
\label{fig9}
\end{figure}

\begin{figure}[t!]
\centering
\includegraphics[width=0.5\textwidth]
{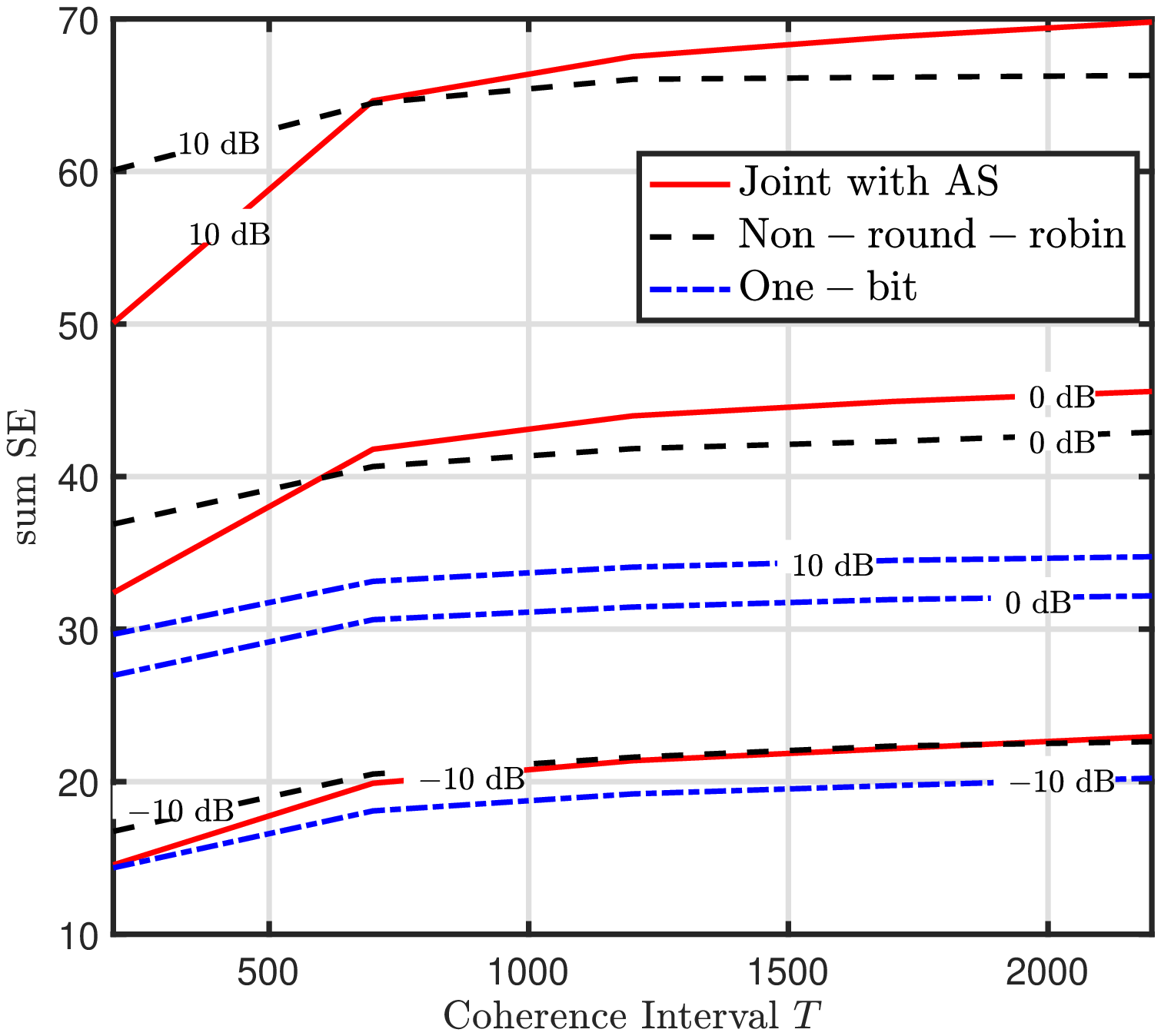}
\caption{Sum SE for ZF detection versus $T$ for $M=100$, $N=20$, and $SNR=-10, 0, 10$ dB.}
\label{fig10}
\end{figure}

%
%
\begin{figure}[t!]
\centering
\includegraphics[width=0.5\textwidth]
{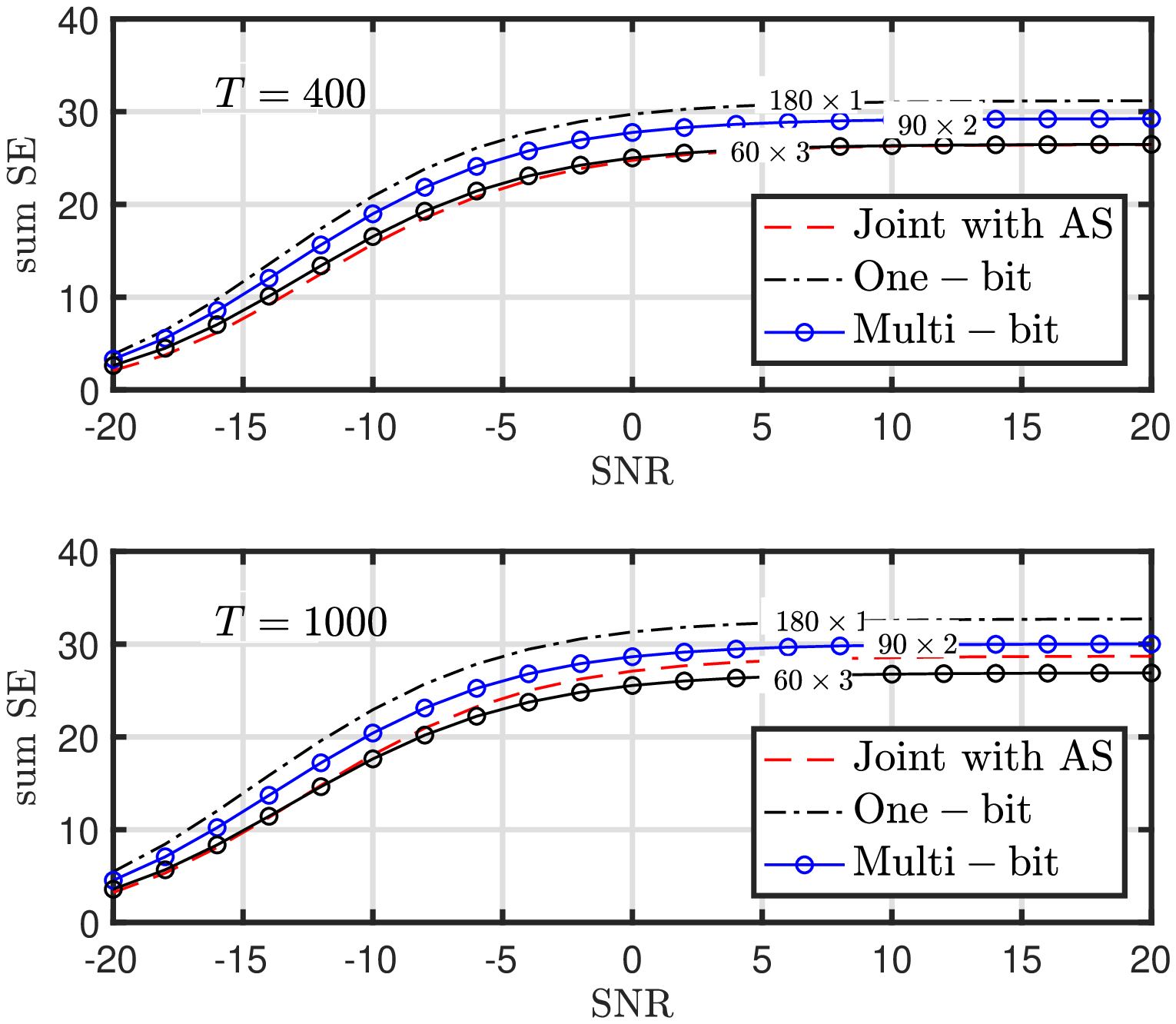}
\caption{Sum SE for MRC detection versus SNR for $180$ comparators and $T=400,~1000$.}
\label{fig7b}
\end{figure}

\begin{figure}[t!]
\centering
\includegraphics[width=0.5\textwidth]
{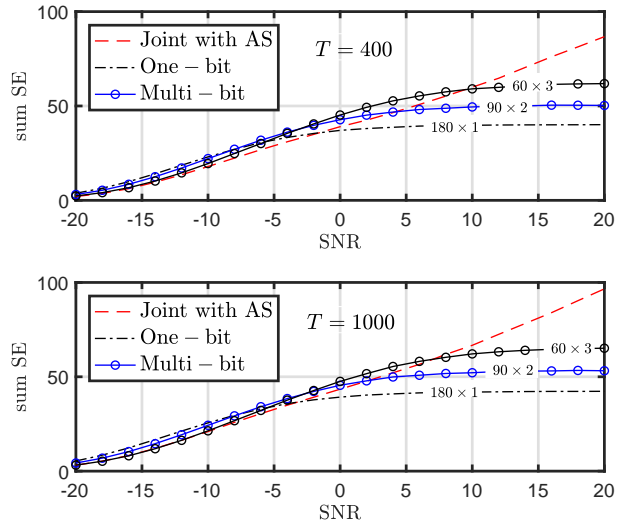}
\caption{Sum SE for ZF detection versus SNR for $180$ comparators and $T=400,~1000$.}
\label{fig8b}
\end{figure}

The next example investigates the impact of distributing the resolution (i.e., the comparators of the ADCs) across the array with different numbers of antennas. If we assume that the ``high-resolution'' ADCs consist of 5 bits \cite{Roth}, a mixed-ADC architecture with $N=20$ high-resolution and $M-N=80$ one-bit ADCs will have 180 total comparators. Figs.~ \ref{fig7b} and \ref{fig8b} illustrate the SE achieved by distributing the 180 comparators across arrays of different length for MRC and ZF detection, respectively. In these figures, ``Joint with AS'' and ``Non-round-robin'' refer to mixed-ADC architectures with $N=20$ 5-bit ADCs and $M-N=80$ one-bit ADCs, ``One-bit'' corresponds to $M=180$ antennas with one-bit ADCs, and ``Multi-bit'' indicates a system with either $M=90$ 2-bit ADCs or $M=60$ 3-bit ADCs. As we see in the figures, it can be inferred that for MRC detection, which is interference limited, it is better to have a larger number of antennas with lower-resolution ADCs instead of equipping the BS with fewer antennas and high resolution ADCs. This is consistent with the results of \cite{Park,Hessam}, and is due to the fact that a larger number of antennas helps the system to more effectively cancel the interference. On the other hand, for ZF detection which is noise limited, the use of high-resolution ADCs avoids additional quantization noise imposed by the low-resolution ADCs, and is more beneficial than having a larger number of antennas with low-resolution ADCs at high SNR.        

Finally, Figs. \ref{fig11} and \ref{fig12} show the impact of the number of high-resolution ADCs in a mixed-ADC system with $M=100$ antennas, $K=10$ users, and various numbers $N$ of high-resolution ADCs, where $N=100$ denotes the all-high-resolution system. It is apparent that with a large enough coherence interval and a sufficient number of high-resolution ADCs, the mixed-ADC implementation with joint round-robin channel estimation and antenna selection outperforms the all-one-bit architecture and mixed-ADC without round-robin training. The gains are greatest when ZF detection is used and the SNR is high, but such gains must be weighed against the increased power consumption and hardware complexity.\color{black}

\begin{figure}[t!]
\centering
\includegraphics[width=0.5\textwidth]
{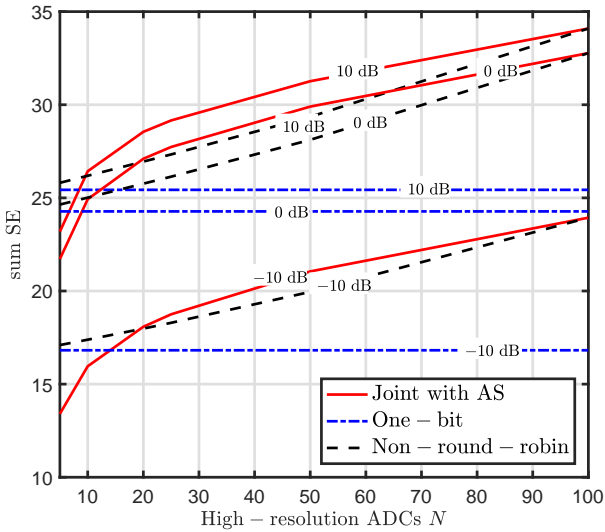}
\caption{Sum SE for MRC detection versus $N$ for $SNR=-10, 0, 10$ dB and $T=1000$.}
\label{fig11}
\end{figure}

\begin{figure}[t!]
\centering
\includegraphics[width=0.5\textwidth]
{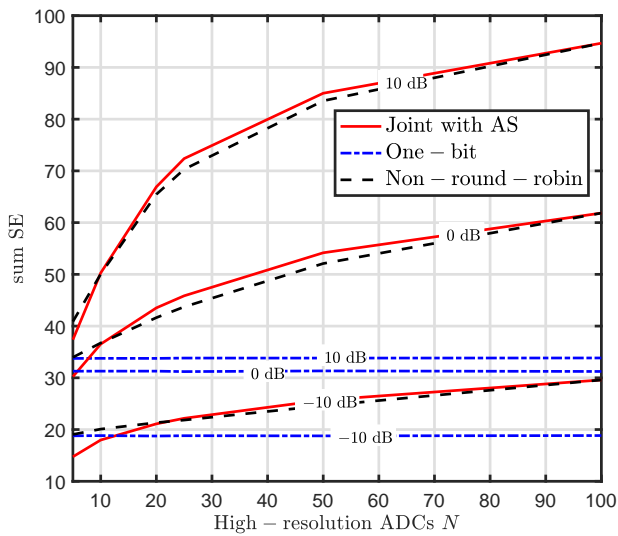}
\caption{Sum SE for ZF detection versus $N$ for $SNR=-10, 0, 10$ dB and $T=1000$.}
\label{fig12}
\end{figure}

\section{Conclusion}\label{sec:concolusion}
We studied the spectral efficiency of mixed-ADC massive MIMO systems with either MRC or ZF detection. We showed that properly accounting for the impact of the quantized receivers using the Bussgang decomposition is important for obtaining an accurate analysis of the SE. We introduced a joint channel estimation approach to leverage both high-resolution ADCs and one-bit ADCs and our analytical and numerical results confirmed the benefit of joint channel estimation for low SNRs. 

Mixed-ADC detection with MRC and ZF detectors and antenna selection were also studied. Analytical expressions were derived for MRC detection and a numerical performance analysis was performed for ZF detection. It was shown that antenna selection provides a slight advantage for high SNRs while this advantage tends to disappear for low SNRs.  

We showed that the SNR, the number of high-resolution ADCs and the length of the coherence interval play a pivotal role in determining the performance of mixed-ADC systems. We showed that, in general, mixed-ADC architectures will have the greatest benefit compared to implementations with all low-resolution ADCs when ZF detection is used and the SNR is relatively high. In such cases, the gain of the mixed-ADC approach can be substantial.  Gains are also possible for MRC, but they are not as significant, and require larger numbers of high-resolution ADCs to see a benefit compared with the ZF case. The more complicated mixed-ADC approach based on ADC switching and round-robin training can achieve the best performance in some cases, particularly when the coherence interval is long and more high-resolution ADCs are available to reduce the number of training interval repetitions. Otherwise, a mixed-ADC implementation without ADC switching and extra training is preferred.\color{black}  

\color{black} 
\section*{Appendix}\label{sec:appendix}
\subsection{Proof of Theorem \ref{theo3} }
From (\ref{training matrix}), the observations from the high-resolution ADCs can be written as
\begin{equation}\label{full_res_obser}
	\boldsymbol{v}(0)=\sqrt{\frac{1}{\eta p_k}}\mathbf{X}\boldsymbol{\phi_k^{*}}=\boldsymbol{g}_k+\tilde{\boldsymbol{n}}(0),
\end{equation}
where $\tilde{\boldsymbol{n}}(0)\sim\mathcal{CN}(\mathbf{0},\frac{\sigma_n^2}{\eta p_k}\boldsymbol{I}_M)$. In addition, from (\ref{matrix quantized training}), the observations from the one-bit ADCs become
\begin{equation}\label{one-bit_observ}
	\boldsymbol{v}(t)=\sqrt{\frac{1}{\eta p_k}}\mathbf{Y}_t\boldsymbol{\bar{\phi_k}^{*}}=\boldsymbol{g}_k+\tilde{\boldsymbol{n}}(t)+\tilde{\boldsymbol{q}}(t),~~t\in\mathcal{T},
\end{equation} 
where $\tilde{\boldsymbol{n}}(t)\sim\mathcal{CN}(\mathbf{0},\frac{\sigma_n^2}{\eta p_k}\boldsymbol{I}_M)$ is independent of $\tilde{\boldsymbol{n}}(t')$ for $t\neq t'$, and $\tilde{\boldsymbol{q}}(t)=\sqrt{\frac{1}{\eta p_k}}\mathbf{Q}(t)\boldsymbol{\bar{\phi_k}^{*}}$. Since the elements of $\boldsymbol{v}(t)$ are independent, we can estimate the $m$th channel $g_{mk}$ separately. Therefore, stacking all the observations in a vector, we can write
\begin{equation}\label{stacked_obser}
	\underbrace{\begin{bmatrix}
        v_m(0) \\
        \vdots \\
        v_m(t) \\
        \vdots \\
        v_m(\frac{M}{N}-1)
\end{bmatrix}}_{\mathbf{v}}=\underbrace{\begin{bmatrix}
        1 \\
        \vdots \\
        1 \\
        \vdots \\
        1
\end{bmatrix}}_{\mathbf{1}_{\frac{M}{N}}}g_{mk}+
\underbrace{\begin{bmatrix}
        \tilde{n}_m(0) \\
        \vdots \\
        \tilde{n}_m(t)+\tilde{q}_m(t) \\
        \vdots \\
        \tilde{n}_m(\frac{M}{N}-1)+\tilde{q}_m(\frac{M}{N}-1)
\end{bmatrix}}_{\mathbf{u}}.
\end{equation}

As a result, the LMMSE estimation of the $m$th channel coefficient for the $k$th user is \cite{Kay}
\begin{equation}\label{estimate1_appen}
\hat{g}_{mk}=\left(\frac{1}{\beta_k}+\mathbf{1}_{\frac{M}{N}}^T\mathbf{C}_{\mathbf{u}}^{-1}\mathbf{1}_{\frac{M}{N}}\right)^{-1}\mathbf{1}_{\frac{M}{N}}^{T}\mathbf{C}_{\mathbf{u}}^{-1}\mathbf{v}.
\end{equation}
In Eq. (\ref{estimate1_appen}), $\mathbf{C}_{\mathbf{u}}$ denotes the covariance matrix of $\mathbf{u}$ which is a block diagonal matrix of the form
\begin{equation}\label{block_diagonal}
	\mathbf{C}_{\mathbf{u}}=
	\begin{bmatrix}
    \frac{\sigma_n^2}{\eta p_k} & 0 & \dots  & 0 \\
    0 & \sigma^2_{w_k} & \dots  & \varrho_k \\
    \vdots & \vdots & \ddots & \vdots \\
    0 & \varrho_k & \dots  & \sigma^2_{w_k}
\end{bmatrix}=
\begin{bmatrix}
    \frac{\sigma_n^2}{\eta p_k} & \mathbf{0}  \\
    \mathbf{0} & \mathbf{S}  \\
\end{bmatrix},
\end{equation}
where 
\begin{equation}\label{var_rho_def}
	\varrho_k=\mathbb{E}\{\left(\tilde{n}_m(t)+\tilde{q}_m(t)\right)\left(\tilde{n}_m(t')+\tilde{q}_m(t')\right)^{*}\},~~t\neq t',
\end{equation}
can be easily calculated with the aid of the Bussgang decomposition and the arcsine law as in (\ref{covariance}). 
Substituting (\ref{block_diagonal}) into (\ref{estimate1_appen}), we have
\begin{multline}\label{estimate2_appen}
\hat{g}_{mk}=\left(\frac{1}{\beta_k}+\frac{\eta p_k}{\sigma_n^2}+\mathbf{1}_{\frac{M}{N}-1}^T\mathbf{S}^{-1}\mathbf{1}_{\frac{M}{N}-1}\right)^{-1}\\
\times\Bigg[\frac{\sigma_n^2}{\eta p_k}~~\mathbf{1}_{\frac{M}{N}-1}^{T}\mathbf{S}^{-1}\Bigg]\mathbf{v}.
\end{multline}
To calculate the inverse of the matrix $\mathbf{S}$, we re-write it as 
\begin{equation}\label{S_form}
	\mathbf{S}=\left(\sigma^2_{w_k}-\varrho_k\right)\mathbf{I}_{\frac{M}{N}-1}+\varrho_k\mathbf{1}_{\frac{M}{N}-1}\mathbf{1}_{\frac{M}{N}-1}^T,
\end{equation}
and use Woodbury's matrix identity:
\begin{multline}\label{woodbury}
\mathbf{S}^{-1}=\frac{1}{\sigma_{w_k}^{2}-\varrho_k}\mathbf{I}_{\frac{M}{N}-1}-\\
\frac{1}{\left(\sigma_{w_k}^{2}-\varrho_k\right)^2}\left(\frac{1}{\varrho_k}+\frac{(\frac{M}{N}-1)}{\sigma_{w_k}^{2}-\varrho_k}\right)^{-1}\mathbf{1}_{\frac{M}{N}-1}\mathbf{1}_{\frac{M}{N}-1}^T,
\end{multline}
which yields
\begin{equation}\label{1TS-1}
	\mathbf{1}_{\frac{M}{N}-1}^T\mathbf{S}^{-1}=\frac{1}{\sigma^2_{w_k}+\left(\frac{M}{N}-2\right)\varrho_k}\mathbf{1}_{\frac{M}{N}-1}^T,
\end{equation}
\begin{equation}\label{1TS-11}
	\mathbf{1}_{\frac{M}{N}-1}^T\mathbf{S}^{-1}\mathbf{1}_{\frac{M}{N}-1}=\frac{\left(\frac{M}{N}-1\right)}{\sigma^2_{w_k}+\left(\frac{M}{N}-2\right)\varrho_k}.
\end{equation}
Substituting (\ref{1TS-1}) and (\ref{1TS-11}) into (\ref{estimate2_appen}) completes the proof. 

\subsection{Proof of Theorem \ref{smart_mrc} }\label{B}
Denote the energy of the $m$th row, $m\in\mathcal{M}=\{1,...,M\}$, of $\hat{\mathbf{H}}$ by $\mathcal{E}_{m}$, i.e.,
\begin{equation}\label{energy}
	\mathcal{E}_{m}\triangleq\sum_{k=1}^{K}{\left|\hat{h}_{mk}\right|^2}.
\end{equation}
To do antenna selection, we must connect the $N$ high-resolution ADCs to the antennas corresponding to the largest $\mathcal{E}_m$. Suppose that the indices of the $N$ antennas to which the high-resolution ADCs are connected are contained in the set $\mathcal{N}$. Hence, we have
\begin{multline}\label{proof_smart_MRC_1}
\sum_{k=1}^{K}{\mathbb{E}\left\{{\hat{\boldsymbol{h}}}_{k}^{H}\mathbf{C}_{\boldsymbol{q}_d}\hat{\boldsymbol{h}}_{k}\right\}}=\\
K\mathbb{E}\left\{{\hat{\boldsymbol{h}}}_{k}^{H}\mathbf{C}_{\boldsymbol{q}_d}\hat{\boldsymbol{h}}_{k}\right\}=
\left(1-\frac{2}{\pi}\right)\sum_{\mathcal{M}\backslash\mathcal{N}}^{}{\mathbb{E}\{\mathcal{E}_m\}}.	
\end{multline}
Eq. (\ref{proof_smart_MRC_1}) provides a criterion for connecting the $N$ high-resolution ADCs in the data transmission phase. In fact, it states that, for the MRC receiver, the expected value in (\ref{proof_smart_MRC_1}) will be minimized if the high-resolution ADCs are connected to the antennas corresponding to the largest $\mathcal{E}_m$. Denote $\mathcal{E}_{(m)}$ as the $m$th smallest value of $\mathcal{E}_m$, i.e.,
\[
\mathcal{E}_{(1)}\leq\mathcal{E}_{(2)}\leq\cdots\leq\mathcal{E}_{(M)}.
\]
Hence, $\mathcal{E}_{(m)}$ is the $m$th order 
statistic, and assuming that the $\mathcal{E}_{(m)}$ are statistically independent and identically distributed, we have \cite{David} 
\begin{multline}\label{integral}
\mathbb{E}\{\mathcal{E}_{(m)}\}=\\
M\binom {M-1}{m-1}\int_{-\infty}^{\infty}{x\left[F(x)\right]^{m-1}\left[1-F(x)\right]^{M-m}dF(x)},
\end{multline}
where $x$ is the realization of $\mathcal{E}_{(m)}$ and $F(x)$ is the cumulative distribution function of $\mathcal{E}_{m}$. For the case that we have considered, where the channel coefficients are i.i.d. Rayleigh distributed, the $\mathcal{E}_{m}$ are independent Gamma random variables with 
\begin{equation}\label{gamma cdf}
	F(x)={\gamma\left(\frac{x}{\sigma_{\hat{h}}^{2}},K\right)},
\end{equation}
where $\gamma(.,.)$ denotes the incomplete Gamma function. From \cite{Nadarajah}, the integral (\ref{integral}) can be calculated in closed form for Gamma random variables as\color{black}  
\begin{equation}\label{chi_gamma}
	\mathbb{E}\{\mathcal{E}_{(m)}\}=
	{\sigma_{\hat{h}}^{2}}\chi_m.
\end{equation} 
This is in contrast to the unordered case where $\mathbb{E}\{\mathcal{E}_{m}\}=K\sigma_{\hat{h}}^{2}$.
 As a result
 \begin{equation}\label{proof_smart_MRC_3}
 	\text{min}\left\{\mathbb{E}\left\{{\hat{\boldsymbol{h}}}_{k}^{H}\mathbf{C}_{\boldsymbol{q}_d}\hat{\boldsymbol{h}}_{k}\right\}\right\}=\left(1-\frac{2}{\pi}\right)\frac{\sigma_{\hat{h}}^{2}}{K}\sum_{m=1}^{M-N}{\chi_m}.
 \end{equation}
The remaining terms in (\ref{uplink achievable rate}) can be calculated similar to the case where the high-resolution ADCs are connected to arbitrary antennas. Plugging these terms and (\ref{proof_smart_MRC_3}) into (\ref{uplink achievable rate}) and some algebraic manipulation results in (\ref{MRC_smart}).
\color{black}

\end{document}